\newtheorem{theorem}{Theorem}
\newtheorem{lemma}{Lemma}
\newtheorem{proposition}{Proposition}
\theoremstyle{remark}
\newtheorem{remark}{Remark}
\newtheorem{example}{Example}
\newtheorem{question}{Question}
\newtheorem*{corollary}{Corollary}
\DeclareMathOperator{\KS}{\mathrm{C}\mskip 0.7mu}
\DeclareMathOperator*{\E}{\mathbb{E}}
\DeclareMathOperator{\HH}{\mathrm{H}\mskip 0.7mu}
\newcommand{\B}[1]{\mathbb{B}^{#1}}
\newcommand{\Noise}[1]{N_{#1}}
\newcommand{\littleO}{o}
\newcommand{\ind}[1]{\mathbf{1}_{#1}}
\newcommand{\ones}{\mathbf{1}}
\pgfplotsset{compat=1.14}
\def\eps{\varepsilon}
\DeclareMathOperator{\poly}{\mathrm{poly}\mskip 2 mu}
\newcommand{\cnd}{\mskip 2mu |\mskip 2mu}
\let\ge=\geqslant
\let\le=\leqslant
\title{Random noise increases Kolmogorov complexity and Hausdorff dimension\footnote{The work was supported by Russian Academic Excellence Project $5$--$100$, RaCAF ANR-15-CE40-0016-01 grant and RFBR 19-01-00563A grants.}}
\author{%Peter G\'acs,
Gleb Posobin\thanks{National Research University Higher School of Economics, \texttt{posobin@gmail.com}. Supported by MK-5379.2018.1 grant. Part of the work done while visiting LIRMM, CNRS, University of Montpellier. Current affiliation: Columbia University, NY, USA. The preparation of the final version was supported by NSF CAREER Award CCF-1844887 and CCF-1563155 grants.}, Alexander Shen\thanks{LIRMM CNRS, Universify of Montpellier, \texttt{alexander.shen@lirmm.fr}. Part of the work done while visiting Toyota Technological University, Chicago.}}
\date{}
\begin{document}
\maketitle

\begin{abstract}
% What is known for arbitrary change
Consider a binary string $x$ of length $n$ whose Kolmogorov complexity is $\alpha n$ for some $\alpha<1$. We want to increase the complexity of $x$ by changing a small fraction of bits in $x$. This is always possible: Buhrman, Fortnow, Newman and Vereshchagin showed  \cite{Buhrman2005} that the increase can be at least $\delta n$ for large $n$ (where $\delta$ is some positive number that depends on $\alpha$ and the allowed fraction of changed bits).

% Our question
We consider a related question: what happens with the complexity of $x$ when we \emph{randomly} change a small fraction of the bits (changing each bit independently with some probability $\tau$)? We prove that a linear increase in complexity happens with high probability, but this increase is smaller than in the case of arbitrary change. We note that the amount of the increase depends on $x$ (strings of the same complexity could behave differently), and give exact lower and upper bounds for this increase (with $o(n)$ precision).

%Tools and techniques
The proof uses the combinatorial and probabilistic technique that goes back to Ahlswede, G\'acs and K\"orner \cite{Ahlswede1976}. For the reader's convenience (and also because we need a slightly stronger statement) we provide a simplified exposition of this technique, so the paper is self-contained. 

%Infinite case
The same technique is used to prove the results about the (effective Hausdorff) dimension of infinite sequences. We show that random change increases the dimension with probability $1$, and provide an optimal lower bound for the dimension of the changed sequence. We also improve a result from~\cite{Greenberg2018} and show that for every sequence $\omega$ of dimension $\alpha$ there exists a strongly $\alpha$-random sequence $\omega'$ such that the Besicovitch distance between $\omega$ and $\omega'$ is $0$.

 \end{abstract}

\section{Introduction}\label{sec:introduction}

The Kolmogorov complexity $\KS(x)$ of a binary string $x$ is defined as the minimal length of a program that generates $x$, assuming that we use an optimal programming language that makes the complexity function minimal up to an $O(1)$ additive term (see~\cite{LiVitanyi2008,ShenUspenskyVereshchagin} for details). There are several versions of Kolmogorov complexity; we consider the original version, called \emph{plain} complexity. In fact, for our considerations the difference between different versions of Kolmogorov complexity does not matter, since they differ only by $O(\log n)$ additive term for $n$-bit strings, but we restrict ourselves to plain complexity for simplicity. 

The complexity of $n$-bit strings is between $0$ and $n$ (we omit $O(1)$ additive terms). Consider a string $x$ of length $n$ that has some intermediate complexity, say $0.5n$. Imagine that we are allowed to change a small fraction of bits in $x$, say, $1\%$ of all bits. Can we decrease the complexity of $x$? Can we increase the complexity of $x$? What happens if we change randomly chosen $1\%$ of bits?

In other words, consider a Hamming ball with center $x$ and radius $0.01n$, i.e., the set of strings that differ from $x$ in at most $0.01n$ positions. What can be said about the minimal complexity of strings in this ball? the maximal complexity of strings in this ball? the typical complexity of strings in this ball?

The answer may depend on $x$: different strings of the same complexity may behave differently if we are interested in the complexities of neighbor strings. For example, if the first half of $x$ is a random string, and the second half contains only zeros, the string $x$ has complexity $0.5n$ and it is easy to decrease its complexity by shifting the boundary between the random part and zero part: to move the boundary to $0.48n$ from $0.5n$ we need to change about $0.01n$ bits, and the complexity becomes close to $0.48n$. On the other hand, if $x$ is a random codeword of an error-correcting code with $2^{0.5n}$ codewords of length $n$ that corrects up to $0.01n$ errors, then $x$ also has complexity $0.5n$, but no change of $0.01n$ (or less) bits can decrease the complexity of~$x$, since~$x$ can be reconstructed from the changed version.

The question about the complexity \emph{decrease} is studied by algorithmic statistics (see~\cite{Vereshchagin2010} or the survey~\cite{Vereshchagin2017}), and the full answer is known. For each $x$ one may consider the function 
\[
d\mapsto \text{(the minimal complexity of strings in the $d$-ball centered at $x$)}.
\]
It starts at $\KS(x)$ (when $d=0$) and then decreases, reaching $0$ at $d=n/2$ (since we can change all bits to zeros or to ones). The algorithmic statistic tells us which functions may appear in this way (see \cite[section 6.2]{Vereshchagin2017} or~\cite[theorem 257]{ShenUspenskyVereshchagin}).\footnote{Note that algorithmic statistics uses a different language. Instead of a string $y$ in the $d$-ball centered at $x$, it speaks about a $d$-ball centered at $y$ and containing $x$. This ball is considered as a statistical model for $x$.}

The question about the complexity \emph{increase} is less studied. It is known that some complexity increase is always guaranteed, as shown in~\cite{Buhrman2005}. The amount of this increase may depend on $x$. If $x$ is a random codeword of an error-correcting code, then the changed version of $x$ contains all the information both about $x$ itself and the places where it was changed. This leads to the maximal increase in complexity. The minimal increase, as shown in~\cite{Buhrman2005}, happens for $x$ that is a random element of the Hamming ball of some radius with center $0^n$. However, the natural question: which functions may appear as 
\[
d\mapsto \text{(the maximal complexity of strings in the $d$-ball centered at $x$)},
\]
remains open.

In our paper we study the \emph{typical} complexity of a string that can be obtained from $x$ by changing a fraction of bits chosen randomly. Let us return to our example and consider again a string $x$ of length $n$ and complexity $0.5n$. Let us change about $1\%$ of bits in $x$, changing each bit independently\footnote{From the probabilistic viewpoint it is more naturally to change all the bits independently with the same probability~$0.01$. Then the number of changed bits is not exactly $0.01n$, but is close to $0.01n$ with high probability.} with probability $0.01$. Does this change increase the complexity of $x$? It depends on the changed bits, but it turns out that \emph{random change increases the complexity of the string with high probability}: we get a string of complexity at least $0.501n$ with probability at least $99\%$, for all large enough $n$ (the result is necessarily asymptotic, since the Kolmogorov complexity function is defined up to $O(1)$ terms).

% general question about parameters when this is true: \beta > \alpha + \delta(\alpha,\tau)
Of course, the parameters above are chosen only as an example, and the following general statement is true. For some $\tau\in(0,1)$ consider the random noise $N_\tau$ that changes each position in a given $n$-bit string independently with probability $\tau$. 

\begin{theorem}\label{thm:increase}
There exists a strictly positive function $\delta(\alpha,\tau)$ defined for $\alpha,\tau\in (0,1)$ with the following property: for all sufficiently large $n$, for every $\alpha\in(0,1)$, for every $\tau\in(0,1)$, for $\beta=\alpha+\delta(\alpha,\tau)$, and for every $x$ such that $\KS(x)\ge \alpha n$, the probability of the event
\[
 \KS(N_\tau(x))>\beta  n
\]
is at least $1-1/n$. 
\end{theorem}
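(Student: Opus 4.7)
The strategy is the standard reduction of the algorithmic-information claim to a combinatorial one on the hypercube.

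\textbf{Step 1: From complexity to counting.} Let $B_\beta = \{y \in \{0,1\}^n : \KS(y) \le \beta n\}$. Then $|B_\beta| \le 2^{\beta n + 1}$ and $B_\beta$ is computably enumerable from $n$ and $\lfloor \beta n \rfloor$, so $\KS(B_\beta) = O(\log n)$. Define
\[
 A = \{x \in \{0,1\}^n : \Pr[N_\tau(x) \in B_\beta] > 1/n\}.
\]
Since this probability can be approximated from below given the enumeration of $B_\beta$, the set $A$ itself is enumerable from $n$ and $\lfloor \beta n\rfloor$. Every $x\in A$ is therefore specified by $O(\log n)$ bits plus its index in the enumeration, so $\KS(x)\le \log|A|+O(\log n)$. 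Hence it suffices to show that, with $\beta=\alpha+\delta(\alpha,\tau)$, we have $\log|A|\le \alpha n - \Omega(n)$ for all large $n$; then no $x$ with $\KS(x)\ge \alpha n$ can lie in $A$, which is exactly the theorem.

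\textbf{Step 2: The combinatorial lemma.} The core of the argument is a bound valid for every $B \subseteq \{0,1\}^n$:
\[
 \tfrac1n \log |A_\tau(B)| \le \varphi(\beta, \tau) + o(1) \quad \text{whenever} \quad \tfrac1n \log |B| \le \beta,
\]
where $\varphi(\cdot,\tau)$ is continuous, satisfies $\varphi(\beta,\tau)<\beta$ for $\beta\in(0,1)$, and $\varphi(1,\tau)=1$. The extremal case is the Hamming ball: if $B$ is the ball of radius $\rho n$ around $0^n$ with $\rho=\HH^{-1}(\beta)\le 1/2$, then the Hamming weight of $N_\tau(x)$ has mean $n\tau+d(x,0^n)(1-2\tau)$, and by concentration every $x\in A_\tau(B)$ must satisfy $d(x,0^n)\lesssim \rho_0 n$ with $\rho_0=\max\{0,(\rho-\tau)/(1-2\tau)\}$. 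For $\tau\in(0,1/2)$ and $\rho\in(0,1/2)$ one checks $\rho_0<\rho$, which yields $\varphi(\beta,\tau)=\HH(\rho_0)<\HH(\rho)=\beta$; the case $\tau>1/2$ reduces to this by flipping bits, and $\tau=1/2$ is trivial since $N_{1/2}(x)$ is uniform. Following Ahlswede--G\'acs--K\"orner, one shows that no $B$ beats the ball (up to subexponential factors) by decomposing the pairs $(x,N_\tau(x))$ according to their joint type class and counting within each class.

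\textbf{Step 3: Producing $\delta$.} Since $\varphi(\alpha,\tau)<\alpha$ strictly and $\varphi$ is continuous in its first argument, there exists $\delta=\delta(\alpha,\tau)>0$ such that $\varphi(\alpha+\delta,\tau)<\alpha$. Plugging $\beta=\alpha+\delta$ into the combinatorial lemma yields $\log|A|\le(\alpha-\Omega(1))n$, which comfortably beats the threshold $\alpha n - \Omega(n)$ needed in Step 1.

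\textbf{Main obstacle.} All the difficulty lies in the combinatorial lemma. The naive approach, applying Markov's inequality to $\E_x[\Pr[N_\tau(x)\in B_\beta]]=|B_\beta|/2^n$, gives only $|A|\le n|B_\beta|$, i.e., $\varphi(\beta,\tau)\le\beta+o(1)$, with no gap at all. The gap must come from exploiting that $N_\tau(x)$ concentrates on a thin Hamming shell around $x$, so $x\in A$ forces a substantial fraction of that shell to lie inside $B_\beta$; ruling this out for all but a small set of $x$'s is essentially a hypercube isoperimetric statement, and it is precisely the AGK machinery the paper promises to re-expose in simplified form.
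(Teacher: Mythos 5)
Your reduction skeleton (Steps 1 and 3) is correct and is exactly the paper's route, and you have even identified the correct extremal curve: your $\rho_0=(\rho-\tau)/(1-2\tau)$ with $\varphi(\beta,\tau)=\HH(\rho_0)$ is precisely the paper's relation $\alpha=\HH(p)$, $\beta=\HH(N(p,\tau))$. But the entire mathematical content of the theorem is the combinatorial lemma of your Step 2, and you do not prove it: you verify only the case where $B$ is a Hamming ball and then assert that ``no $B$ beats the ball'' by ``decomposing pairs according to their joint type class,'' which is neither carried out nor the argument the paper uses. The paper's actual chain is: (i) the Wyner--Ziv entropy inequality $\HH(N_\tau(P))\ge n\HH(N(p,\tau))$ for arbitrary $P$ with $\HH(P)=n\HH(p)$, proved by single-letterization plus convexity of the curve $p\mapsto(\HH(p),\HH(N(p,\tau)))$; (ii) a coding argument converting this into the \emph{weak} combinatorial statement, which bounds $\#A$ only for the set of $x$ with $\Pr[N_\tau(x)\in B]\ge 1-1/n$; and (iii) the blowing-up lemma (via McDiarmid's inequality), which upgrades the threshold from $1-1/n$ to $1/n$ by passing to an $O(\sqrt{n\log n})$-neighborhood of $B$. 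Your sketch works with the $1/n$ threshold throughout and never isolates step (iii), which is a genuinely separate isoperimetric input — without it, even a correct entropy or type-counting argument would only control the much smaller set of $x$ that land in $B$ almost surely. Note also that for Theorem~\ref{thm:increase} alone (some $\delta>0$, not the optimal one) the paper gives a short self-contained Fourier-analytic proof of the combinatorial lemma in Appendix A; if you only want this theorem, that is the argument to supply, and it avoids the entropy machinery entirely.

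A smaller but real gap: your claim that $A$ is enumerable from $n$ and $\lfloor\beta n\rfloor$ is false as stated, because the defining condition $\Pr[N_\tau(x)\in B_\beta]>1/n$ depends on $\tau$, which is an arbitrary and possibly non-computable real. The paper repairs this by replacing $\tau$ with a rational $\tau'$ satisfying $|\tau-\tau'|<1/2n^2$ (costing $O(\log n)$ extra bits in the description) and using a coupling argument to show $|\Pr[N_\tau(x)\in B]-\Pr[N_{\tau'}(x)\in B]|\le n|\tau-\tau'|<1/2n$, which is why one needs the combinatorial lemma at the threshold $1/2n$ rather than $1/n$. Your Step 1 should be amended accordingly; the fix is routine but must be made, since otherwise the bound $\KS(x)\le\log|A|+O(\log n)$ has no justification.
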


\begin{remark}
We use the inequality $\KS(x)\ge\alpha n$ (and not an equality $\KS(x)=\alpha n$) to avoid technical problems: the complexity $\KS(x)$ is an integer, and $\alpha n$ may not be an integer.
\end{remark}

\begin{remark}
One may consider only $\tau\le 1/2$ since reversing all bits does not change Kolmogorov complexity (so $\tau$ and $1-\tau$ give the same increase in complexity). For $\tau=1/2$ the variable $N_\tau(x)$ is uniformly distributed in the Boolean cube $\mathbb{B}^n$, so its complexity is close to $n$, and the statement is easy (for arbitrary $\beta<1$). 
\end{remark}

\begin{remark}
 We use $\alpha,\tau$ as parameters while fixing the probability bound as $1-1/n$. As we will see, the choice of this bound is not important: we could use a stronger bound (e.g., $1-1/n^d$ for arbitrary $d$) as well.
\end{remark}

% the answer depends on $x$; reminder about the Shannon entropy
Now a natural question arises: what is the optimal bound in Theorem~\ref{thm:increase}, i.e., the maximal possible value of $\delta(\alpha,\tau)$? In other words, fix $\alpha$ and $\tau$. Theorem~\ref{thm:increase} guarantees that there exists some $\beta>\alpha$ such that every string $x$ of length $n$ (sufficiently large) and complexity at least $\alpha n$ is guaranteed to have complexity at least $\beta n$ after $\tau$-noise $N_\tau$ (with high probability).  \emph{What is the maximal value of $\beta$ for which such a statement is true} (for given $\alpha$ and~$\tau$)?

Before answering this question, we should note that the guaranteed complexity increase depends on $x$: for different strings of the same complexity the typical complexity of $N_\tau(x)$ could be different. Here are the two opposite examples (with minimal and maximal increase, as we will see).

\begin{example}\label{ex:minimal}
Consider some $p\in(0,1)$ and the Bernoulli distribution $B_p$ on the Boolean cube $\mathbb{B}^n$ (bits are independent; every bit equals $1$ with probability $p$). With high probability the complexity of a $B_p$-random string is $o(n)$-close to $n\HH(p)$ (see, e.g.,~\cite[chapter~7]{ShenUspenskyVereshchagin}), where $\HH(p)$ is the Shannon entropy function
\[
\HH(p) = -p \log p - (1-p)\log (1-p).
\] 
After applying $\tau$-noise the distribution $B_p$ is transformed into $B_{N(\tau,p)}$, where 
\[
N(\tau,p)=p(1-\tau)+(1-p)\tau=p+\tau-2p\tau
\]
is the probability to change the bit if we first change it with probability $p$ and then (independently) with probability $\tau$.\footnote{We use the letter $N$ (for ``noise'') both in $N_\tau(x)$ (random change with probability $\tau$, one argument) and in $N(\tau,p)$ (the parameter of the Bernoulli distribution $B_p$ after applying $N_\tau$, no subscript, two arguments).} The complexity of $N_\tau(x)$ is close (with high probability) to $\HH(N(\tau,p))n$ since the $B_p$-random string $x$ and the $\tau$-noise are chosen independently. So in this case we have (with high probability) the complexity increase
\[
 \HH(p) n \to \HH(N(\tau,p))n.
\]
Note that $N(\tau,p)$ is closer to $1/2$ than $p$, and $\HH$ is strictly increasing on $[0,1/2]$, so indeed some increase happens.
\end{example}

\begin{example}\label{ex:maximal}
Now consider an error-correcting code that has $2^{\alpha n}$ codewords and corrects up to $\tau n$ errors (this means that the Hamming distance between codewords is greater than $2\tau n$). Such a code may exist or not depending on the choice of $\alpha$ and $\tau$. The basic result in coding theory, Gilbert's bound, guarantees that such a code exists if $\alpha$ and $\tau$ are not too large. Consider some pair of $\alpha$ and $\tau$ for which such a code exist; moreover, let us assume that it corrects up to $\tau'n$ errors for some $\tau'>\tau$. We assume also that the code itself (the list of codewords) has small complexity, say, $O(\log n)$. This can be achieved by choosing the first (in some ordering) code with required parameters.

Now take a random codeword of this code; most of the codewords have complexity close to $\alpha n$. If we randomly change each bit with probability $\tau$, then with high probability we get at most $\tau' n$ errors, therefore, decoding is possible and the pair $(x,\text{noise})$ can be reconstructed from $N_\tau(x)$, the noisy version of $x$. Then the complexity of $N_\tau(x)$ is close to the complexity of the pair $(x,\text{noise})$, which (due to independence) is close to $\alpha n + \HH(\tau)n$ with high probability. So in this case we have the complexity increase
$[
\alpha n \to (\alpha + \HH(\tau)) n.
$]
\end{example}
\begin{remark}
Note that this increase is the maximal possible not only for  random independent noise but for any change in $x$ that changes a $\tau$-fraction of bits. See below about the difference between random change and arbitrary change.
\end{remark}

Now we formulate the result we promised. It says that the complexity increase observed in Example~\ref{ex:minimal} is the minimal possible: such an increase is guaranteed for every string of given complexity.

\begin{theorem}\label{thm:main}
Let $\alpha=\HH(p)$ for some $p\le 1/2$. Let $\tau$ be an arbitrary number in $(0,1)$. Let $\beta = \HH(N(p,\tau))$. Then for sufficiently large $n$ the following is true: for every string $x$  of length $n$ with $\KS(x)\ge \alpha n$, we have
\[
\Pr[\KS(N_\tau(x))\ge \beta n - o(n)] \ge 1-\frac{1}{n}.
\] 
\end{theorem}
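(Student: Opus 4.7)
The plan is to argue by contradiction via the Ahlswede--Gács--Körner framework. Fix $p, \tau, \alpha, \beta$ as in the statement and choose a function $f(n)$ with $f(n) = \omega(\log n)$ and $f(n) = o(n)$. Define
\[
A_n = \{y \in \B n : \KS(y) < \beta n - f(n)\}, \qquad B_n = \{x \in \B n : \Pr[\Noise{\tau}(x) \in A_n] \ge 1/n\}.
\]
Then $|A_n| < 2^{\beta n - f(n)}$, and both sets are enumerable uniformly in $n$, $\tau$, $\beta$, and $f(n)$ ($A_n$ by enumerating short descriptions and $B_n$ by approximating the probability from below while enumerating $A_n$), so $\KS(x) \le \log|B_n| + O(\log n)$ for every $x \in B_n$. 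The theorem reduces to the cardinality estimate
\[
|B_n| \le 2^{\alpha n - f(n) + o(n)},
\]
because then $\KS(x) < \alpha n$ for every $x \in B_n$ and $n$ large, contradicting $\KS(x) \ge \alpha n$; hence $\Pr[\KS(\Noise{\tau}(x)) < \beta n - f(n)] < 1/n$, which is the stated conclusion (with the $o(n)$ term absorbed in $f(n)$).

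The bound on $|B_n|$ is the substance of the argument and is where the AGK technique enters. Step one is Marton's blowing-up lemma applied to the product Bernoulli measure $\Pr[\Noise{\tau}(x) = \,\cdot\,]$: for each $x \in B_n$ and $k = O(\sqrt{n \log n})$, the $k$-Hamming neighborhood $A_n^{(k)}$ satisfies $\Pr[\Noise{\tau}(x) \in A_n^{(k)}] \ge 1 - 1/n$, and since a sublinear-radius blow-up inflates cardinality only by a factor $2^{o(n)}$, one has $|A_n^{(k)}| \le 2^{\beta n - f(n) + o(n)}$. Step two is a type decomposition: split $B_n$ according to the number of ones in $x$, writing $B_n = \bigcup_{p'} B_n^{p'}$ with $B_n^{p'} = \{x \in B_n : x \text{ has } p' n \text{ ones}\}$. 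For $x \in B_n^{p'}$, the binomial noise output concentrates on strings with $N(p',\tau)\cdot n$ ones up to $o(n)$ fluctuations, so the effective overlap of $A_n^{(k)}$ with the support of $\Pr[\Noise{\tau}(x) = \,\cdot\,]$ is restricted to that output type neighborhood. Step three is a double count: for each $p'$, sum $\Pr[\Noise{\tau}(x) = y]$ over pairs $(x,y)$ with $x \in B_n^{p'}$ and $y$ in the relevant output type slab, once in each order. After the blow-up, the sum is $\ge (1 - o(1))\,|B_n^{p'}|$; reorganised in the other order, the BSC symmetry $\Pr[\Noise{\tau}(x) = y] = \Pr[\Noise{\tau}(y) = x]$ turns the inner sum into $\Pr[\Noise{\tau}(y) \text{ has } p' n \text{ ones}]$, and binomial concentration pins this to strings $y$ whose type is the noise preimage of $p'$. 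Bookkeeping with the binomial normalising constants $\approx 2^{\HH(p')n}$ and $\approx 2^{\HH(N(p',\tau))n}$, together with the extra Sanov-type large-deviations suppression that appears when $p' \ne p$, yields $|B_n^{p'}| \le 2^{\alpha n - f(n) + o(n)}$; summing the $O(n)$ contributing types preserves the exponential rate.

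The main obstacle is the quantitative coupling in step three. The trivial double count, using only $\sum_x \Pr[\Noise{\tau}(x) = y] = 1$, gives merely $|B_n| \le n|A_n| \le 2^{\beta n - f(n)}$, which is useless because $\beta > \alpha$. Recovering the extra $2^{(\beta - \alpha) n}$ saving requires both (a) the blow-up of step one, which upgrades the condition $\Pr[\Noise{\tau}(x) \in A_n] \ge 1/n$ to $\Pr[\Noise{\tau}(x) \in A_n^{(k)}] \ge 1 - o(1)$ so that the double count is essentially lossless, and (b) the type restriction of step two, which concentrates the backward preimage distribution on a single output type class rather than spreading it over all $2^n$ strings. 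Packaging this coupling as a sharp combinatorial inequality on Hamming-type preimages is precisely the Ahlswede--Gács--Körner technique, and the paper's promised self-contained exposition of that technique supplies the inequality; the theorem then follows from the reduction in the first paragraph.
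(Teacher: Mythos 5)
Your overall architecture coincides with the paper's: reduce the complexity statement to a cardinality bound on $B_n$ via enumeration, and use the blowing-up lemma to upgrade the threshold $1/n$ to $1-1/n$. Your first paragraph and step one are essentially the paper's Sections 3--4 (one fixable point aside: $\tau$ is an arbitrary real, so $B_n$ is not enumerable ``uniformly in $\tau$''; the paper replaces $\tau$ by a rational $\tau'$ with $|\tau-\tau'|\le 1/(2n^2)$ and a coupling argument so that the enumeration costs only $O(\log n)$ extra bits). The genuine gap is in steps two and three, where you replace the paper's entropy argument by a weight decomposition plus double counting: that core does not deliver the claimed bound $\#B_n^{p'}\le 2^{\alpha n-f(n)+o(n)}$ for all types $p'$.

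Concretely, for $x$ of weight $p'n$ your double count gives, after the blow-up and the restriction to the output slab of weight $q'n$ with $q'=N(p',\tau)$,
\[
(1-o(1))\,\#B_n^{p'}\;\le\;\sum_{y\in A'}\Pr\bigl[\mathrm{wt}(N_\tau(y))=p'n\bigr]\;\le\;\#A'\cdot 2^{-(\HH(q')-\HH(p'))n+o(n)},
\]
where $\HH(q')-\HH(p')$ is exactly the Sanov rate you invoke (the optimizing joint type is the reverse-channel type, and the rate equals $\HH(Y)-\HH(X)$ for the BSC joint distribution). With $\#A'\le 2^{\beta n+o(n)}$ and $\beta=\HH(q)$ this yields $\#B_n^{p'}\le 2^{(\HH(q)+\HH(p')-\HH(q'))n+o(n)}$, and you need the exponent to be at most $\HH(p)$, i.e.\ $\HH(p')-\HH(p)\le \HH(q')-\HH(q)$. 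For $p'<p$ this holds (the curve $\HH(p)\mapsto\HH(N(p,\tau))$ has slope at most $1$), but for $p'>p$ it is the assertion that the slope is at least $1$, which is false: the curve is convex, lies above the diagonal and ends at $(1,1)$, so its slope is at most $1$ everywhere --- the very fact the paper uses to absorb error terms. In the worst case $p'=1/2$ one gets $q'=1/2$ and the bound degenerates to $\#B_n^{1/2}\le 2^{\beta n+o(n)}$, the trivial bound you set out to beat. This is not an artifact of loose bookkeeping: in the extremal configuration ($B_n$ a Hamming ball of radius $pn$ around a center of weight $n/2$, $A_n$ the concentric ball with log-size $\beta n$) essentially all of $B_n$ lies in the type class $p'=1/2$, so decomposing by weight relative to $\zero$ looks at the wrong coordinate; the relevant ``type'' is the distance to an unknown center. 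Recovering the factor $2^{(\beta-\alpha)n}$ requires a global argument: the paper derives the weak combinatorial statement from the Wyner--Ziv entropy inequality $\HH(N_\tau(P))\ge n\HH(N(p,\tau))$ (proved by tensorization plus convexity of the one-letter curve, Appendix B) via a short coding argument, and only then applies the blowing-up lemma. Your step one survives; steps two and three must be replaced by that entropy-plus-coding argument (or another global bound such as the hypercontractivity estimate of Appendix A, which however is not sharp).
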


Here $o(n)$ denotes some function such that $o(n)/n\to 0$ as $n\to\infty$. This function does not depend on $\alpha$, $\beta$, and $\tau$. As the proof will show, we may take $o(n)=c\sqrt{n}\log^{3/2}n$ for some $c$.

\begin{figure}[ht]
\begin{center}
	\resizebox{.7\textwidth}{!}{\input{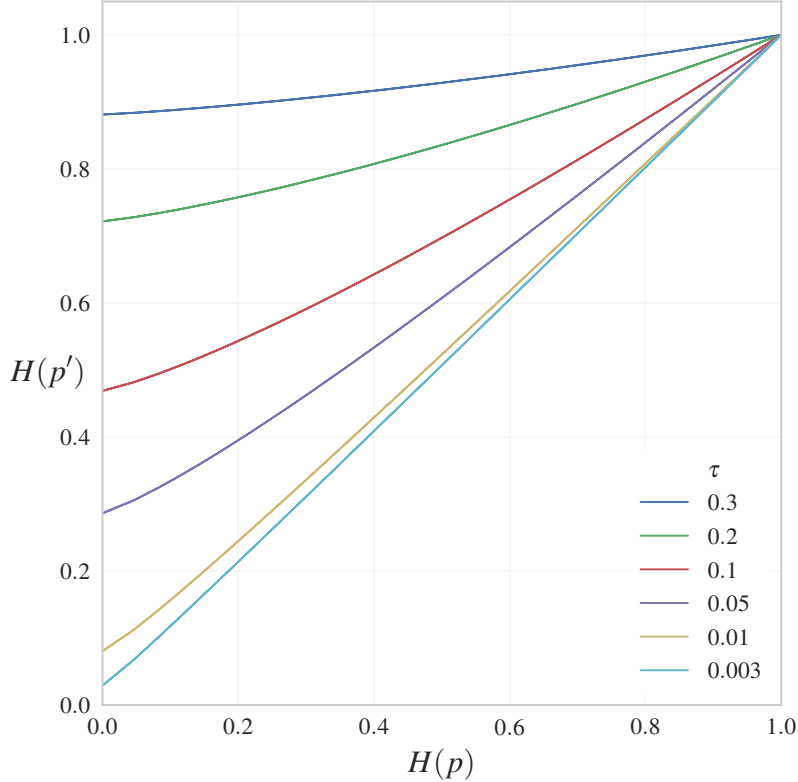}}
\end{center}
\caption{The curves $(\HH(p), \HH(p'))$ where $p'=N(p,\tau)$. Six different values of $\tau$ are shown.}\label{pic:boundary}
\end{figure}

Figure~\ref{pic:boundary} shows the values of $(\alpha,\beta)$ where Theorem~\ref{thm:main} can be applied, for six different values of  $\tau$. Example~\ref{ex:minimal} shows that the value of $\beta$ in this theorem is optimal. 

Theorem~\ref{thm:main} is the main result of the paper. It is proven, as it often happens with results about Kolmogorov complexity, by looking at its combinatorial and probabilistic counterparts. In the next section we explain the scheme of the proof and outline its main ideas. 

Then we explain the details of the proof. It starts with the Shannon information counterpart of our complexity statement that is proven in~\cite{Wyner1973}. In Section~\ref{sec:comb-proof} we derive two different combinatorial counterparts following~\cite{Ahlswede1976}. Finally, in Section~\ref{sec:complexity} we consider the details of the conversion of a combinatorial statement to a complexity one and finish the proof.

In Section~\ref{sec:infinite} we extend our techniques to infinite sequences and compare the results obtained by our tools and the results about arbitrary change of a small fraction of bits from~\cite{Greenberg2018}.

In fact, if we are interested only in \emph{some} complexity increase (Theorem~\ref{thm:increase}), a simple argument (suggested by Fedor Nazarov) that uses Fourier transform is enough. A stronger result (but still not optimal) can be obtained by hypercontractivity techniques. These two arguments are sketched in Appendix A.

In Appendix B, for reader's convenience, we reproduce the proof of the result from~\cite{Ahlswede1976} (about the increase in entropy caused by random noise) used in the proof.

Finally in Appendix C we provide short (and quite standard) proofs of the McDiarmid inequality as a corollary of the Azuma--Hoeffding inequality and of the Azuma--Hoeffding inequality itself.

\section{Proof sketch}
\label{sec:proof-sketch}

\subsection{Three ways to measure the amount of information}

Kolmogorov's  first paper on algorithmic information theory~\cite{Kolmogorov1965} was called ``Three approaches to the Quantitative Definition of Information''. These three approaches can be summarized as follows:
\begin{itemize}
\item (Combinatorial): an element of a set of cardinality $N$ carries $\log N$ bits of information.
\item (Algorithmic): a binary string $x$ carries $\KS(x)$ bits of information, where $\KS(x)$ is the minimal bit length of a program that produces~$x$.
\item (Shannon information theory, or probabilistic approach): a random variable $\xi$ that has $k$ values with probabilities $p_1,\ldots,p_k$, carries $\HH(\xi)$ bits of information, where $\HH(\xi)$ is the Shannon entropy of $\xi$, defined as
\[
\HH(\xi)=p_1\log\frac{1}{p_1} +\ldots + p_k \log\frac{1}{p_k}
\]
\end{itemize}
One cannot compare directly these three quantities since the measured objects are different (sets, strings, random variables). Still these quantities are closely related, and many statements that are true for one of these notions can be reformulated for other two. Several examples of this type are discussed in~\cite[chapters 7 and 10]{ShenUspenskyVereshchagin}, and we use this technique in our proof.

\subsection{Arbitrary change}\label{subsec:arbitrary}

We start by recalling an argument from~\cite{Buhrman2005} for the case when we are allowed to change arbitrary bits (only the number of changed bits is bounded) and want to increase complexity. (A similar reduction will be a part of our argument.)

Fix some parameters $\alpha$ (determining the complexity of the original string), $\tau$ (the maximal fraction of changed bits), and $\beta$ (determining the complexity of the changed string). Let us repeat the complexity statement and give its combinatorial equivalent.
\begin{itemize}
\item (Complexity version) Every string $x$ of length $n$ and complexity at least $\alpha n$ can be changed in at most $\tau n$ positions to obtain a string of complexity at least $\beta n$. 
\item (Combinatorial version) For every subset $B$ of the Boolean cube $\mathbb{B}^n$ of cardinality at most $2^{\beta n}$, its $\tau n$-interior has cardinality at most $2^{\alpha n}$.
\end{itemize}
Here by $d$-interior of a set $X\subset \mathbb{B}^n$ we mean the set of strings $x\in\mathbb{B}^n$ such that the entire ball of radius $d$ centered in $x$ belongs to $X$. In other words, a string $x$ does \emph{not} belong to the $d$-interior of $X$ if $x$ can be changed in at most $d$ positions to get a string outside $X$.
\begin{remark}
The combinatorial statement can be reformulated in a dual way: for every set $A\subset\mathbb{B}^n$ of cardinality greater than $2^{\alpha n}$, its $d$-neighborhood has cardinality greater than~$2^{\beta n}$.
\end{remark}
These two statements (combinatorial and complexity versions) are almost equivalent: one of them implies the other if we allow a small change in $\alpha$ and $\beta$ (in fact, $O(\log n)/n$ change is enough).   Indeed, assume  first that the combinatorial statement is true. Consider the set $B$ of all $n$-bit strings that have complexity less than $\beta n$. Then $\#B < 2^{\beta n}$, so we can apply the combinatorial statement.\footnote{For simplicity we assume that $\alpha n$, $\beta n$, and $\tau n$ are integers. This is not important, since we have $O(\log n)$ term anyway.} It guarantees that the $\tau n$-interior of $B$ (we denote it by $A$) has at most $2^{\alpha n}$ elements. The set $A$ can be enumerated given $n$, $\beta n$ and $\tau n$. Indeed, knowing $n$ and $\beta n$, one can enumerate elements of $B$ (by running in parallel all programs of length less than $\beta n$; note that there are less than $2^{\beta n}$ of them). Knowing also $\tau n$, we may enumerate $A$ (if a ball is contained in $B$ entirely, this will become known at some stage of the enumeration of $B$). Then every element of $A$ can be encoded by its ordinal number in this enumeration. This guarantees that the complexity of all elements of $A$ does not exceed $\alpha n+O(\log n)$ (the additional $O(\log n)$ bits are needed to encode $n$, $\beta n$, and $\tau n$). Therefore, if some $x$ has complexity greater that $\alpha n +O(\log n)$, it is not in $A$, i.e., $x$ can be changed in at most $\tau n$ positions to get a string outside $B$. By the definition of $B$, this changed string has complexity at least $\beta n$, as required. The term $O(\log n)$ can be absorbed by a small change in $\alpha$.

Let us explain also (though this direction is not needed for our purpose) why the complexity statement implies the combinatorial one. Assume that there are some sets $B$ that violate the combinatorial statement, i.e., contain at most $2^{\beta n}$ strings but have $\tau n$-interior of size greater than $2^{\alpha n}$. Such a set can be found by exhaustive search, and the first set $B$ that appears during the search has complexity $O(\log n)$.  Its elements, therefore, have complexity $\beta n +O(\log n)$: to specify an element, we need to specify $B$ and the ordinal number of the element in $B$. From this we conclude, using the complexity statement (and changing $\beta$ slightly) that all elements of the $\tau n$-interior of $B$ have complexity at most $\alpha n$. Therefore, there are at most $O(2^{\alpha n})$ of them, and the size of the interior is bounded by $2^{\alpha n}$ (again up to a small change in $\alpha$).

Now we return to the result from~\cite{Buhrman2005}. Let $x$ be a string of length $n$ and complexity at least $\alpha n+O(\log n)$, where $\alpha = \HH(p)$ for some $p\le 1/2$. Let $\tau$ be a real such that $p+\tau\le 1/2$, and $\beta = \HH(p+\tau)$. Then $x$ can be changed in at most $\tau n$ positions to get a string of complexity at least $\beta n$. As we have seen, this statement from~\cite{Buhrman2005} is a corollary of the following combinatorial result.

\begin{proposition}\label{prop:harper}
Let $p\le 1/2$ be some number and let $\alpha=H(p)$. Let $\tau$ be some positive number so that $p+\tau\le 1/2$, and let $\beta=\HH(p+\tau)$. Let $B$ be an arbitrary subset of $\,\mathbb{B}^n$ of size at most $2^{\beta n}$. Let $A$ be a subset of $\,\mathbb{B}^n$, and for every $x\in A$ the Hamming ball of radius $\tau n$ with center $x$ is contained in $B$. Then the cardinality of $A$ does not exceed $\poly(n)2^{\alpha n}$. 
\end{proposition}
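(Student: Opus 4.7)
The proposition is essentially a rephrasing of Harper's vertex-isoperimetric inequality on the Hamming cube, with the polynomial factor absorbing the standard entropy-versus-cardinality slack. My plan is to invoke Harper's inequality directly and then convert integer radii into binary entropies via the usual ball-size estimates.

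Recall the relevant form of Harper's inequality: for every $A\subseteq\B{n}$ and all integers $r,t\ge 0$, if $|A|\ge\sum_{i=0}^{r}\binom{n}{i}$ then $|N_t(A)|\ge\sum_{i=0}^{r+t}\binom{n}{i}$, where $N_t(A)$ denotes the $t$-neighborhood of $A$ in the Hamming metric. The hypothesis of the proposition says exactly that the Hamming ball of radius $\tau n$ around every $x\in A$ is contained in $B$, i.e.\ $N_{\tau n}(A)\subseteq B$, so $|N_{\tau n}(A)|\le |B|\le 2^{\beta n}$.

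Let $r$ be the largest integer with $\sum_{i\le r+\tau n}\binom{n}{i}\le 2^{\beta n}$. Harper's inequality then forces $|A|\le\sum_{i\le r}\binom{n}{i}$, for otherwise $|N_{\tau n}(A)|$ would strictly exceed $2^{\beta n}$. It remains to show that $r\le pn+O(\log n)$. Using the standard estimates
\[
\frac{1}{n+1}\cdot 2^{\HH(k)n}\ \le\ \sum_{i=0}^{kn}\binom{n}{i}\ \le\ 2^{\HH(k)n}\qquad (k\le 1/2),
\]
the defining inequality for $r$ gives $\HH\bigl((r+\tau n)/n\bigr)\le \HH(p+\tau)+O(\log n)/n$. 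Since $\HH$ is smooth and strictly increasing on $(0,1/2)$ and $p+\tau\le 1/2$, inverting this yields $(r+\tau n)/n\le p+\tau+O(\log n)/n$, hence $r\le pn+O(\log n)$. Finally
\[
|A|\ \le\ \sum_{i\le r}\binom{n}{i}\ \le\ 2^{\HH(r/n)n}\ \le\ 2^{\HH(p)n+O(\log n)}\ =\ \poly(n)\cdot 2^{\alpha n}.
\]

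The only mild technical nuisance is the entropy-to-radius inversion, which degrades as $p+\tau$ approaches $1/2$ (where $\HH'$ vanishes); in the non-degenerate regime this is harmless, and no ideas beyond Harper's theorem and Stirling-type entropy bounds are required. Because the proposition lives purely on the combinatorial side, no information-theoretic or probabilistic machinery (in contrast to the random-noise theorem later in the paper) is needed here.
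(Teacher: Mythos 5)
Your argument is exactly the one the paper intends: the paper explicitly declines to give details, observing only that the proposition is a direct consequence of Harper's theorem together with the $\poly(n)2^{\HH(\gamma)n}$ estimate for the size of a Hamming ball of radius $\gamma n$, and your write-up fills in precisely those details (threshold form of Harper, then inverting the entropy bound on the radius). The only caveat is the one you already flag: the entropy-to-radius inversion needs $p+\tau$ bounded away from $1/2$ --- at $p+\tau=1/2$ the statement itself degenerates, since then $2^{\beta n}=2^n$ and $B$ may be all of $\mathbb{B}^n$ --- but that is an imprecision in the proposition as stated, not a gap in your proof.
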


This proposition is a direct consequence of Harper's theorem  (see, e.g.,~\cite{Frankl1981}) that says that for a subset of $\mathbb{B}^n$ of a given size, \emph{its $d$-interior} (for some fixed $d$) \emph{is maximal when the subset is a Hamming ball} (formally speaking, is between two Hamming balls of sizes $k$ and $k+1$ for some $k$). Or, in dual terms, Harper's theorem says that the \emph{$d$-neighborhood of a set of a given size is minimal if this set is a Hamming ball}. The relation between $2^{\alpha n}$ and $2^{\beta n}$ in the proposition is just the relation between the sizes of balls of radii $pn$ and $(p+\tau)n$ (if we ignore factors that are polynomial in $n$). Note that $p+\tau\le 1/2$ is needed since otherwise the radius exceeds $n/2$ and then the log-size of the ball is close to $n$ and not to $\HH(p+\tau)n$. The $\poly(n)$ factor is needed due to the polynomial factor in the estimate for the ball size in terms of Shannon entropy (the ball of radius $\gamma n$ has size $\poly(n)2^{\HH(\gamma)n}$).

We do not go into details here (and do not reproduce the proof of Harper's theorem) since we need this result only to motivate the corresponding relation between combinatorial and complexity statements for the case of a random noise we are interested in.

\subsection{Random noise: four versions}

For the random noise case we need a more complicated argument. First, we need to consider also the probabilistic version of the statement (in addition to the complexity and combinatorial versions). Second, we need \emph{two} combinatorial versions (strong and weak). Fix some $\alpha$, $\beta$ and $\tau$. Here are the four versions we are speaking about; all four statements are equivalent (are true for the same parameters $\alpha$, $\tau$, and $\beta$, up to $o(1)$-changes in the parameters):

\begin{itemize}
\item (Shannon information version, \cite{Wyner1973}) For every random variable $P$ with values in $\mathbb{B}^n$ such that $\HH(P)\ge\alpha n$, the variable $N_\tau(P)$ that is obtained from $P$ by applying independent noise changing each bit with probability $\tau$, has entropy $\HH(N_\tau(P))\ge \beta n$.

\item (Complexity version) For every string $x$ of length $n$ and complexity $\KS(x)\ge \alpha n$, the probability of the event ``$\KS(N_\tau(x))\ge \beta n$'' is at least $1-1/n$. (Again, $N_\tau$ is random noise that independently changes each bit with probability $\tau$, but now it is applied to the string $x$ and not to a random variable) 

\item (Strong combinatorial version) For every set $B\subset\mathbb{B}^n$ of size at most $2^{\beta n}$ the set $A$ of all strings $x$ such that $\Pr[N_\tau(x)\in B]\ge 1/n$ has size $\#A\le 2^{\alpha n}$.

\item (Weak combinatorial version) For every set $B\subset\mathbb{B}^n$ of size at most $2^{\beta n}$ the set $A$ of all strings $x$ such that $\Pr[N_\tau(x)\in B]\ge 1-1/n$ has size $\#A\le 2^{\alpha n}$.
\end{itemize}

The difference between weak and strong combinatorial versions is due to the different thresholds for the probability. In the weak version the set $A$ contains only strings that get into $B$ after the noise \emph{almost surely} (with probability at least $1-1/n$). In the strong version the set $A$ is bigger and includes all strings that get into $B$ \emph{with non-negligible probability} (at least $1/n$), so the upper bound for $\#A$ becomes a stronger statement.

\begin{remark}
In the case of arbitrary changes (the result from~\cite{Buhrman2005}) we consider the $\tau n$-interior of $B$, the set of points that remain in $B$ after arbitrary change in (at most) $\tau n$ positions. If a point is \emph{not} in the interior, it can be moved outside $B$ by changing at most $\tau n$ bits. Now we consider (in the strong version) the set of points that get into $B$ with probability at least $1/n$. If a point is \emph{not} in this set, the random $\tau$-noise will move it outside $B$ almost surely (with probability at least $1-1/n$). Again the complexity and (strong) combinatorial versions are equivalent up to $o(1)$ changes in parameters, for the same reasons.
\end{remark}

This explains why we are interested in the strong combinatorial statement. The weak one is used as an intermediate step in the chain of arguments. This chain goes as follows:
\begin{itemize}

\item First the Shannon entropy statement is proven using tools from information theory (one-letter characterization and inequalities for Shannon entropy); this was done in~\cite{Wyner1973}.

\item Then we derive the  weak combinatorial statement from the entropy statement using a simple coding argument from~\cite{Ahlswede1976}.

\item Then we show that weak combinatorial statement implies the strong one, using a tool that is called the ``blowing-up lemma'' in~\cite{Ahlswede1976} (now it is more popular under the name of ``concentration inequalities'').

\item Finally, we note that the strong combinatorial statement implies the complexity statement (using the argument sketched above).

\end{itemize}

\subsection{Tools used in the proof}

Let us give a brief description of the tools used in these arguments.

To prove the Shannon entropy statement, following~\cite{Wyner1973}, fix some $\tau$. Consider the set $S_n$ of all pairs $(\HH(P),\HH(N_\tau(P)))$ for all random variables with values in $\mathbb{B}^n$. For each $n$ we get a subset of the square $[0,n]\times [0,n]$. For $n=1$ it is a curve made of all points $(\HH(p),\HH(N(p,\tau)))$ (shown in Figure~\ref{pic:boundary} for six different values of $\tau$). We start by showing that this curve is convex (performing some computation with power series). Then we show, using the convexity of the curve and some inequalities for entropies, that for every $n$ the set $S_n$ is above the same curve (scaled by factor $n$), and this is the entropy statement we need. See Appendix B for details.

To derive the weak combinatorial statement from the entropy statement, we use a coding argument. Assume that two sets $A$ and $B$ are given, and for every point $x\in A$ the random point $N_\tau(x)$ belongs to $B$ with probability at least $1-1/n$. Consider a random variable $U_A$ that is uniformly distributed in $A$. Then $\HH(U_A)=\log\# A$, and if $\#A\ge 2^{\alpha n}$, then $\HH(U_A)\ge\alpha n$ and $\HH(N_\tau(U_A))\ge \beta n$ (assuming the entropy statement is true for given $\alpha$, $\beta$, and $\tau$). On the other hand, the variable $N_\tau(U_A)$ can be encoded as follows: 
\begin{itemize}
\item one bit (flag) says whether $N_\tau(U_A)$ is in $B$ or not;
\item if yes, then $\log \#B$ bits are used to encode an element of $B$;
\item otherwise $n$ bits are used to encode the value of $N_\tau(U_A)$ (trivial encoding).
\end{itemize}
The average length of this code for $N_{\tau}(U_A))$ does not exceed
\[
1+ \left(1-\frac{1}{n}\right)\log\#B + \frac{1}{n}\cdot n \le \log \#B +O(1).
\]
(Note that if the second case has probability less than $1/n$, the average length is even smaller.) The entropy of a random variable $N_\tau(U_A)$ does not exceed the average length of the code. So we get $\beta n \le \HH(N_\tau(U_A)) \le \log \#B+O(1)$ and $\log \#B \ge \beta n - O(1)$, assuming that $\log \#A \ge \alpha n$.

The next step is to derive the strong combinatorial version from the weak one. Assume that two sets $A, B\subset \mathbb{B}^n$ are given, and for each $x\in A$ the probability of the event $N_\tau(x)\in B$ is at least $1/n$. For some $d$ consider the set $B_d$, the $d$-neighborhood of $B$. We will prove (using the concentration inequalities) that for some $d=o(n)$ the probability of the event $N_\tau(x)\in B_d$ is at least $1-1/n$ (for each $x\in A$). So one can apply the weak combinatorial statement to $B_d$ and get a lower bound for $\#B_d$. On the other hand, there is a simple upper bound: $\#B_d \le \# B\times(\text{the size of $d$-ball})$; combining them, we get the required bound for $\#B$. See Section~\ref{sec:comb-proof} for details.

\begin{remark}
One may also note (though it is not needed for our purposes) that the entropy statement is an easy corollary of the complexity statement, and therefore all four are equivalent up to small changes in parameters. This can be proven in a standard way. Consider $N$ independent copies of random variable $P$ and independently apply noise to all of them. Then we write the inequality for the typical values of the complexities; in most cases they are close to the corresponding entropies (up to $o(N)$ error).  Therefore, we get the inequality for entropies with $o(N)$ precision (for $N$ copies) and with $o(1)$ precision for one copy (the entropies are divided by $N$). As $N\to\infty$, the additional term $o(1)$ disappears and we get an exact inequality for entropies.
\end{remark}

\section{Combinatorial version}\label{sec:comb-proof}

Recall the entropy bound from~\cite{Wyner1973} discussed above  (we reproduce its proof in Appendix~B):

\begin{proposition}\label{prop:noise-entropy}
Let $P$ be an arbitrary random variable with values in $\mathbb{B}^n$, and let $P'=N_\tau(P)$ be its noisy version obtained by applying $N_\tau$ independently to each bit in $P$. Choose $p\le 1/2$ in such a way that $\HH(P)=n\HH(p)$. Then consider $q=N(p,\tau)$, the probability to get $1$ if we apply $N_\tau$ to a variable that equals $1$ with probability $p$. Then $\HH(P')\ge n\HH(q)$.
\end{proposition}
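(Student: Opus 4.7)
The plan is to establish a version of what is usually called Mrs.~Gerber's Lemma. Define $g \colon [0, \log 2] \to [0, \log 2]$ by setting $g(\HH(p)) = \HH(N(p, \tau))$ for $p \in [0, 1/2]$; the proposition is then precisely the assertion that $\HH(P') \ge n \cdot g(\HH(P)/n)$. My first step, which I expect to be the main technical obstacle, is to show that $g$ is non-decreasing and convex on $[0, \log 2]$. Monotonicity is easy from the monotonicity of $p \mapsto N(p,\tau)$ and of $\HH$ on $[0, 1/2]$; convexity reduces to checking that the second derivative of $h \mapsto \HH(N(\HH^{-1}(h), \tau))$ is non-negative, where $\HH^{-1}$ denotes the inverse of $\HH$ restricted to $[0,1/2]$. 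This is the ``power-series computation'' alluded to in the sketch, and is the only genuinely analytic ingredient.

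Next I would prove a conditional single-letter version: for any binary $X$ and any $U$ jointly distributed with $X$, if $Y = X \oplus N$ with $N \sim \mathrm{Bernoulli}(\tau)$ independent of $(X, U)$, then $\HH(Y \cnd U) \ge g(\HH(X \cnd U))$. For each fixed value $u$, setting $p(u) = \Pr[X = 1 \cnd U = u]$ and using $\HH(p) = \HH(1-p)$ to reduce to $p(u) \le 1/2$, one has $\HH(Y \cnd U = u) = g(\HH(X \cnd U = u))$ exactly. Taking expectation in $u$ and invoking Jensen's inequality for the convex $g$ yields the stated conditional bound.

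Finally, I would assemble the full inequality via the chain rule and a second application of Jensen. Writing $P'_{<i} = (P'_1, \ldots, P'_{i-1})$ and applying the single-letter step with $U = P'_{<i}$ gives
\[
\HH(P') = \sum_{i=1}^n \HH(P'_i \cnd P'_{<i}) \ge \sum_{i=1}^n g\bigl(\HH(P_i \cnd P'_{<i})\bigr) \ge n\, g\!\left(\frac{1}{n}\sum_{i=1}^n \HH(P_i \cnd P'_{<i})\right).
\]
Because $(P_i, P_{<i}, P'_{<i})$ is a Markov chain --- conditioned on $P_{<i}$, the prefix $P'_{<i}$ depends only on noise bits independent of $P_i$ --- the data processing inequality gives $\HH(P_i \cnd P'_{<i}) \ge \HH(P_i \cnd P_{<i})$. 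Summing and using the chain rule $\sum_i \HH(P_i \cnd P_{<i}) = \HH(P) = n\HH(p)$, the argument of the outer $g$ is at least $\HH(p)$, and the monotonicity of $g$ closes the argument: $\HH(P') \ge n\,g(\HH(p)) = n\HH(q)$, as required.
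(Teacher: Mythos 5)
Your proposal is correct and is essentially the paper's own argument (Appendix B): both hinge on the convexity of the one-letter curve $\HH(p)\mapsto\HH(N(p,\tau))$ --- which you rightly identify as the only analytic ingredient but leave as a reduction to a second-derivative check, whereas the paper carries it out via a power-series computation --- followed by tensorization through the chain rule, Jensen's inequality, and a conditional-independence entropy inequality. The only difference is bookkeeping: you condition both entropies on the noisy prefix $P'_{<i}$ and recover $\HH(P)$ via data processing on the input side, while the paper conditions on the clean prefix and uses the mirror-image inequality $\HH(Q_2\cnd Q_1)\ge \HH(Q_2\cnd P_1)$ on the output side.
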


In this section we use this entropy bound to prove the combinatorial bounds. We start with the weak one and then amplify it to get the strong one, as discussed in Section~\ref{sec:proof-sketch}. First, let us formulate explicitly the weak bound that is derived from Proposition~\ref{prop:noise-entropy} using the argument of Section~\ref{sec:proof-sketch}.

\begin{proposition}\label{prop:weak-combinatorial}
Let $\alpha=\HH(p)$ and $\beta=\HH(N(p,\tau))$. Let $A,B\subset \mathbb{B}^n$ and for every $x\in A$ the probability of the event ``$N_\tau(x)\in B$'' is at least $1-1/n$. If $\log\# A \ge \alpha n$, then $\log \#B\ge \beta n -O(1)$.
\end{proposition}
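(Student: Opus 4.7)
The plan is to execute the coding argument sketched in Section~\ref{sec:proof-sketch}, paying attention to one small parameter-matching subtlety. Let $U_A$ denote a random variable uniformly distributed on $A$, so that $\HH(U_A)=\log\#A\ge\alpha n = n\HH(p)$. Since Proposition~\ref{prop:noise-entropy} is phrased with the equality hypothesis $\HH(P)=n\HH(p')$, I would first pick the unique $p'\in[p,1/2]$ with $n\HH(p')=\HH(U_A)$ and apply that proposition to $P=U_A$, obtaining $\HH(N_\tau(U_A))\ge n\HH(N(p',\tau))$. The map $p\mapsto N(p,\tau)$ is nondecreasing on $[0,1/2]$ when $\tau\le 1/2$ (the case of interest; $\tau>1/2$ reduces to $1-\tau$ by flipping each bit), it maps $[0,1/2]$ into $[0,1/2]$, and $\HH$ is monotone on $[0,1/2]$, so this upgrades to $\HH(N_\tau(U_A))\ge n\HH(N(p,\tau))=\beta n$.

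Next, I would construct a prefix-free code for the random variable $Y=N_\tau(U_A)$ whose expected length is close to $\log\#B$. Fix any enumeration of $B$, and encode $Y$ by a flag bit indicating whether $Y\in B$, followed by either the index of $Y$ inside the enumeration, using $\lceil\log\#B\rceil$ bits, when $Y\in B$, or by $Y$ itself, using $n$ bits, otherwise. The scheme is prefix-free because the flag fixes the length of the suffix. The hypothesis $\Pr[N_\tau(x)\in B]\ge 1-1/n$ for every $x\in A$ gives, by averaging over $U_A$, that $\Pr[Y\in B]\ge 1-1/n$, and since $\lceil\log\#B\rceil\le n$ the expected code length is bounded by
\[
1+\left(1-\tfrac{1}{n}\right)\lceil\log\#B\rceil+\tfrac{1}{n}\cdot n \;=\; \log\#B+O(1).
\]

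Finally, Shannon's noiseless source coding inequality bounds the entropy of $Y$ by the expected length of any prefix-free code for $Y$, so $\beta n\le\HH(Y)\le\log\#B+O(1)$, which rearranges to the desired bound. No step poses a real obstacle here; the only point that requires any thought is the parameter-matching in the first paragraph, which is handled by the short monotonicity argument, and the mild bookkeeping with the ceiling, which is absorbed into the $O(1)$ term.
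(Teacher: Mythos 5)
Your proposal is correct and follows essentially the same route as the paper: apply the entropy bound (Proposition~\ref{prop:noise-entropy}) to the uniform distribution on $A$, then bound $\HH(N_\tau(U_A))$ from above by the expected length of the flag-plus-index code for $B$. The only additions — matching the exact-entropy hypothesis of Proposition~\ref{prop:noise-entropy} via monotonicity of $p\mapsto\HH(N(p,\tau))$ on $[0,1/2]$, and noting the code is prefix-free — are correct refinements of details the paper leaves implicit.
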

In fact, this ``$O(1)$'' is just $2$, but we do not want to be too specific here.
\medskip

Now we need to extend the bound to the case when the probability of the event $N_\tau(x)\in B$ is at least $1/n$. We already discussed how this is done. Consider for some $d$ (depending on~$n$) the Hamming $d$-neighborhood  $B_d$ of $B$. We need to show that
\[
\Pr[N_\tau(x)\in B]\ge \frac{1}{n} \Rightarrow
 \Pr[N_\tau(x)\in B_d]\ge 1-\frac{1}{n}.
\]
for every $x\in \mathbb{B}^n$ (for a suitable $d$). In fact, $x$ does not matter here: we may assume that $x=0\!\ldots\! 0$ (flipping bits in $x$ and $B$ simultaneously). In other terms, we use the following property of the Bernoulli distribution with parameter $\tau$: \emph{if some set $B$ has probability not too small according to this distribution, then its neighborhood $B_d$ has probability close to $1$}. We need this property for $d=o(n)$, see below about the exact value of $d$.

Such a statement is called a \emph{blowing-up lemma} in~\cite{Ahlswede1976}. There are several (and quite different) ways to prove statements of this type. The original proof in \cite{Ahlswede1976} used a result of Margulis from~\cite{Margulis1974} that says that the (Bernoulli) measure of a boundary of an arbitrary set $U\subset\mathbb{B}^{n}$ is not too small compared to the measure of a boundary of a ball of the same size. Iterating this statement (a neighborhood is obtained by adding boundary layer several times), we get the lower bound for the measure of the neighborhood. Another proof was suggested by Marton~\cite{Marton1986}; it is based on the information-theoretical considerations that involve transportation cost inequalities for bounding measure concentration. In this paper we provide a simple proof that uses the McDiarmid inequality~\cite{McDiarmid1989}, a simple consequence of the Azuma--Hoeffding inequality~\cite{Hoeffding1963}.
This proof works for $d=O(\sqrt{n \log n})$.

Let us state the blowing-up lemma in a slightly more general version than we need. Let $X_1,\ldots, X_n$ be (finite) probability spaces. Consider the space $X=X_1\times\ldots\times X_n$ with the product measure $\mu$ (so the coordinates are independent) and Hamming distance $d$ (the number of coordinates that differ). In our case $X=\mathbb{B}^n$ and $\mu$ is the Bernoulli measure with parameter $\tau$. The blowing-up lemma says, informally speaking, that if a set is not too small, then its neighborhood has small complement (the size is measured by~$\mu$). It can be reformulated in a more symmetric way: \emph{if two sets are not too small, then the distance between them is rather small}. (Then this symmetric statement is applied to the original set and the complement of its neighborhood.) Here is the symmetric statement.

\begin{proposition}[Blowing-up lemma, symmetric version]
Let $B,B'$ be two subsets of $X=X_1\times\ldots\times X_n$ with the product measure $\mu$. Then 
\[
d(B,B')\le \sqrt{(n/2)\ln(1/\mu(B))}+\sqrt{(n/2)\ln(1/\mu(B'))}.
\]
\end{proposition}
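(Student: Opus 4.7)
The plan is to use the distance-to-set function $f_A(x)=d(x,A)$, which is $1$-Lipschitz in each coordinate in the Hamming metric, and then apply McDiarmid's inequality to show $f_A$ concentrates tightly around its mean. This gives an upper bound on $\mathbb{E}[f_A]$ in terms of $\mu(A)$, and the triangle inequality converts such mean bounds for $B$ and $B'$ into the desired upper bound on $d(B,B')$.

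First I would verify the purely geometric step: for every point $z\in X$, picking $x\in B$ and $y\in B'$ witnessing the distances $f_B(z)$ and $f_{B'}(z)$ respectively, the triangle inequality gives
\[
d(B,B')\le d(x,y)\le d(x,z)+d(z,y)= f_B(z)+f_{B'}(z).
\]
Since this holds for \emph{every} $z$, taking expectation under $\mu$ yields
\[
d(B,B')\le \mathbb{E}_\mu[f_B]+\mathbb{E}_\mu[f_{B'}].
\]

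Next I would bound each expectation using McDiarmid. Because changing one coordinate of $x$ changes $f_A(x)$ by at most $1$, McDiarmid's inequality says
\[
\Pr_\mu\bigl[\,\mathbb{E}[f_A]-f_A\ge t\,\bigr]\le \exp\!\bigl(-2t^2/n\bigr).
\]
Specializing to $t=\mathbb{E}[f_A]$ and observing that $\{f_A=0\}=A$, we get
\[
\mu(A)=\Pr_\mu[f_A=0]\le \exp\!\bigl(-2\,\mathbb{E}[f_A]^2/n\bigr),
\]
hence $\mathbb{E}[f_A]\le\sqrt{(n/2)\ln(1/\mu(A))}$. Applying this to $A=B$ and $A=B'$ and plugging into the displayed triangle-inequality bound gives exactly the stated inequality.

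The only real subtlety is the Lipschitz verification for $f_A$ on a product of arbitrary finite spaces $X_i$ (not just $\mathbb{B}$): if $x$ and $x'$ differ only in coordinate $i$, then for any $y\in A$ the point $y$ with its $i$-th coordinate replaced by $x'_i$ is at Hamming distance at most $d(x,y)+1$ from $x'$ (or $d(x,y)$ if it already lies in $A$ after the replacement, but in any case one shows $|f_A(x)-f_A(x')|\le 1$ by a direct argument on the witness in $A$). Once this one-coordinate sensitivity bound is in place, McDiarmid's inequality — itself a direct corollary of Azuma--Hoeffding applied to the Doob martingale of $f_A$ with respect to the coordinate filtration, as recalled in Appendix~C — supplies the concentration needed to close the argument. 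No step requires anything beyond this Lipschitz check plus the triangle inequality, so I expect the main (minor) obstacle to be merely writing the Lipschitz bound cleanly for general product spaces rather than the special case $\mathbb{B}^n$.
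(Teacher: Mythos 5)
Your proof is correct and follows essentially the same route as the paper: McDiarmid's inequality applied to the $1$-Lipschitz function $d(\cdot,A)$ yields $\E[d(\cdot,A)]\le\sqrt{(n/2)\ln(1/\mu(A))}$, and the two bounds for $B$ and $B'$ are then combined. The only (cosmetic) difference is that you symmetrize via the pointwise triangle inequality $d(B,B')\le d(z,B)+d(z,B')$ averaged over $z$, whereas the paper uses the single function $d(\cdot,B)$ and invokes both the lower tail on $B$ and the upper tail on $B'$; both combinations are valid.
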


To prove the blowing-up lemma, we use the McDiarmid concentration inequality:  
   
\begin{proposition}[McDiarmid's inequality, \cite{McDiarmid1989}]\label{prop:McDiarmid}
	Consider a function $f\colon X_1\times\ldots\times X_n \to \mathbb{R}$. Assume that changing the $i$-th coordinate changes the value of $f$ at most  by some $c_i$:
	\[
	 |f(x) - f(x')| \le c_i,
	\]
if $x$ and $x'$ coincide everywhere except for the $i$th coordinate. Then
	\[
		\Pr[f-\E f \ge z] \le \exp\left(-\frac{2z^2}{\sum_{i=1}^{n}c_i^2}\right) 
	\]
for arbitrary $z\ge0$.
\end{proposition}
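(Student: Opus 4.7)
The plan is to prove McDiarmid's inequality by the standard Doob-martingale reduction to Azuma--Hoeffding. First I would introduce the filtration $\mathcal{F}_i = \sigma(X_1,\ldots,X_i)$ and the Doob martingale $Y_i = \E[f(X_1,\ldots,X_n)\mid \mathcal{F}_i]$, so that $Y_0 = \E f$ and $Y_n = f$. Writing $D_i = Y_i - Y_{i-1}$, the telescoping identity $f - \E f = \sum_{i=1}^{n} D_i$ expresses the quantity of interest as a sum of martingale differences, and $\E[D_i\mid \mathcal{F}_{i-1}] = 0$ by construction.

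The main technical step is to show that, conditional on $\mathcal{F}_{i-1}$, the difference $D_i$ is supported in an interval of length at most $c_i$. Fix values $x_1,\ldots,x_{i-1}$ and set $g(u) = \E[f(x_1,\ldots,x_{i-1},u,X_{i+1},\ldots,X_n)]$; then $Y_i = g(X_i)$ and $Y_{i-1} = \E\,g(X_i)$ under the conditional law. For any two values $u, u'$ the integrands differ pointwise by at most $c_i$, so averaging over $(X_{i+1},\ldots,X_n)$ preserves this bound: $|g(u) - g(u')| \le c_i$. Hence $D_i = g(X_i) - \E\,g(X_i)$ is a conditionally mean-zero variable whose range has length at most $c_i$.

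With this range bound I would invoke Hoeffding's lemma, namely that a mean-zero variable supported in an interval of length $\ell$ has moment generating function bounded by $\exp(s^2\ell^2/8)$, applied conditionally to obtain $\E[e^{sD_i}\mid \mathcal{F}_{i-1}] \le \exp(s^2 c_i^2/8)$. Iterating the tower property from $i=n$ down to $i=1$ gives $\E\bigl[e^{s(f-\E f)}\bigr] \le \exp\!\bigl(s^2\sum_i c_i^2/8\bigr)$. The Chernoff--Markov bound $\Pr[f-\E f \ge z] \le e^{-sz}\E[e^{s(f-\E f)}]$, optimised at $s = 4z/\sum_i c_i^2$, yields the claimed exponent $-2z^2/\sum_i c_i^2$.

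I expect the only subtle point to be the conditioning argument for the range of $D_i$: one must check that the bounded-difference hypothesis, which is pointwise in the remaining coordinates, passes cleanly through the conditional expectation and produces a symmetric bound on $D_i$ with conditional mean zero. Hoeffding's lemma (itself proved by convexity of $u \mapsto e^{su}$ together with an elementary optimisation of the resulting quadratic bound on $\log \E[e^{sX}]$) and the Chernoff optimisation are routine, and the only place the hypothesis on $f$ enters is through the uniform range bound on each martingale difference.
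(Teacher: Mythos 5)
Your proposal is correct and follows essentially the same route as the paper: the Doob martingale $U_i=\E[f\mid x_1,\ldots,x_i]$, the conditional range bound $c_i$ on each difference, Hoeffding's lemma for the conditional moment generating function, the tower-property iteration, and the Chernoff optimisation at $s=4z/\sum_i c_i^2$. The only cosmetic difference is that the paper isolates the Azuma--Hoeffding inequality as a separate named proposition before specialising to the Doob martingale, whereas you fold the two steps into one argument.
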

Here the probability and expectation are considered with respect to the product distribution $\mu$ (the same as in the blowing-up lemma, see above). This inequality shows that $f$ cannot be much larger than its average on a big set. Applying this inequality to $-f$, we get the same bound for the points where the function is less than its average by $z$ or more.

For the reader's convenience, we reproduce the proof of the McDiarmid inequality in Appendix~C. 

Now let us show why it implies the blowing-up lemma (in the symmetric version).
\begin{proof}[Proof of the blowing-up lemma]
Let  $f(x) = d(x, B)$ be the distance between $x$ and $B$, i.e., the minimal number of coordinates that one has to change in $x$ to get into $B$. This function satisfies the bounded differences property with $c_i = 1$, so we can apply the McDiarmid inequality to it. Let $m$ be the expectation of $f$. The function $f$ equals zero for arguments in $B$ and therefore is below its expectation at least by $m$ (everywhere in $B$), so
	\[
	\mu(B) \le \exp\left(-\frac{2m^2}{n}\right), \quad \text{or} \quad
   m\le \sqrt{(n/2)\ln(1/\mu(B))}
	\]
On the other hand, the function $f$ is at least $d(B,B')$ for arguments in $B'$, so it exceeds its expectation at least by $d(B,B')-m$ (everywhere in $B'$), therefore the McDiarmid inequality gives
	\[
d(B,B')-m\le \sqrt{(n/2)\ln(1/\mu(B'))},
	\]
and it remains to combine the last two inequalities.
\end{proof}

Here is the special case of the blowing-up lemma we need:
\begin{corollary}
If $\mu$ is a distribution on $\mathbb{B}^n$ with independent coordinates, and $B\subset\mathbb{B}^n$ has measure $\mu(B)\ge1/n$, then for $d=O(\sqrt{n\log n})$ we have $\mu(B_d)\ge 1-1/n$.
\end{corollary}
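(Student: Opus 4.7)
The plan is to deduce this as a direct contrapositive of the symmetric blowing-up lemma just proved, by taking the second set to be the complement of the neighborhood. Set $B' = \mathbb{B}^n \setminus B_d$. By the very definition of the $d$-neighborhood, every point outside $B_d$ is at Hamming distance strictly greater than $d$ from every point of $B$; hence $d(B,B') > d$ whenever $B'$ is nonempty.

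Now I would argue by contradiction: suppose $\mu(B') > 1/n$. Then both sets $B$ and $B'$ have measure at least $1/n$, so the symmetric blowing-up lemma applies and gives
\[
d < d(B,B') \le \sqrt{(n/2)\ln(1/\mu(B))} + \sqrt{(n/2)\ln(1/\mu(B'))} \le 2\sqrt{(n/2)\ln n} = \sqrt{2n\ln n}.
\]
Choosing $d = \lceil \sqrt{2n\ln n}\, \rceil$ makes this inequality impossible, so we must have $\mu(B') \le 1/n$, i.e., $\mu(B_d) \ge 1 - 1/n$. Since $\lceil \sqrt{2n\ln n}\, \rceil = O(\sqrt{n\log n})$, the claim follows.

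There is essentially no obstacle here beyond bookkeeping: the corollary is a one-line contrapositive once the symmetric form of the blowing-up lemma is in hand. The only small points of care are (i) ensuring the inequality $d(B,B')>d$ is strict so the contradiction goes through (handled by using the closed $d$-neighborhood in the definition of $B_d$), and (ii) rounding $\sqrt{2n\ln n}$ up to an integer, which only affects the hidden constant in $O(\sqrt{n\log n})$.
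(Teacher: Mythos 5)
Your proof is correct and follows essentially the same route as the paper: apply the symmetric blowing-up lemma to $B$ and the complement of $B_d$, and observe that if both had measure at least $1/n$ the resulting bound $d(B,B')\le 2\sqrt{(n/2)\ln n}=\sqrt{2n\ln n}$ would contradict $d(B,B')>d$. The paper's version is just a more terse statement of the same contrapositive.
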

Indeed, we may apply the blowing-up lemma to $B$ and $B'$, where $B'$ is a complement of $B_d$. If both $B$ and $B'$ have measures at least $1/n$, we get a contradiction for $d\ge2\sqrt{(n/2)\ln n}$ (the distance between $B$ and the complement of its neighborhood $B_d$ exceeds $d$).

\begin{remark}\label{rem:smaller}
In the same way we get a similar result for probabilities $1/n^c$ and $1-1/n^c$ for arbitrary constant $c$  (only the constant factor in $O(\sqrt{n \log n})$ will be different).
\end{remark}

Now we are ready to prove the strong combinatorial version:
\begin{proposition}\label{prop:strong-combinatorial}
Let $\alpha=\HH(p)$ and $\beta=\HH(N(p,\tau))$. Let $A,B\subset \mathbb{B}^n$ and for every $x\in A$ the probability of the event ``$N_\tau(x)\in B$'' is at least $1/n$. If $\log\# A \ge \alpha n$, then $\log \#B\ge \beta n -O(\sqrt{n}\log^{3/2}n)$.
\end{proposition}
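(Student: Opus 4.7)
The plan is to reduce Proposition~\ref{prop:strong-combinatorial} to the weak version (Proposition~\ref{prop:weak-combinatorial}) by thickening $B$ to its Hamming $d$-neighborhood $B_d$ for a carefully chosen $d=O(\sqrt{n\log n})$. The blowing-up lemma will be used to convert the ``probability at least $1/n$'' hypothesis into the ``probability at least $1-1/n$'' hypothesis required by the weak version, at the cost of replacing $B$ by $B_d$; the extra factor $\#B_d/\#B$ will then be controlled by a direct bound on the volume of a Hamming ball.

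More precisely, fix $x\in A$. The distribution of $N_\tau(x)$ is obtained from the $\tau$-Bernoulli distribution $\mu_\tau$ on $\mathbb{B}^n$ by the bitwise XOR with $x$, which is an isometry for the Hamming distance. So after translating coordinates by $x$, the set $B$ becomes a set of $\mu_\tau$-measure at least $1/n$, and its Hamming $d$-neighborhood becomes the translate of $B_d$. By the corollary of the blowing-up lemma (applied to $\mu_\tau$), there exists $d=O(\sqrt{n\log n})$ such that this neighborhood has $\mu_\tau$-measure at least $1-1/n$. Undoing the translation, we conclude that for every $x\in A$, $\Pr[N_\tau(x)\in B_d]\ge 1-1/n$. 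Now the weak combinatorial version applies to the pair $(A,B_d)$ and yields $\log \#B_d\ge\beta n-O(1)$.

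It remains to compare $\#B_d$ with $\#B$. Every element of $B_d$ is within Hamming distance $d$ of some element of $B$, hence $\#B_d\le \#B\cdot V(n,d)$ where $V(n,d)=\sum_{k\le d}\binom{n}{k}$. Using $\binom{n}{k}\le (en/k)^k$ and $d=O(\sqrt{n\log n})$, one gets $\log V(n,d)=O\bigl(d\log(n/d)\bigr)=O\bigl(\sqrt{n\log n}\cdot\log n\bigr)=O(\sqrt{n}\log^{3/2}n)$, and therefore
\[
\log \#B\ge \log \#B_d - \log V(n,d) \ge \beta n - O(\sqrt{n}\log^{3/2}n),
\]
as claimed. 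The main conceptual step is the use of the blowing-up lemma; everything else is routine. The only real tradeoff is the choice of $d$: it must be large enough for the blowing-up bound to give probability $1-1/n$ (forcing $d=\Omega(\sqrt{n\log n})$) yet small enough that the ``payback'' term $\log V(n,d)$ remains $o(n)$. The value $d=\Theta(\sqrt{n\log n})$ meets both requirements and produces the stated $O(\sqrt{n}\log^{3/2}n)$ error term.
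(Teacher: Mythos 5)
Your proposal is correct and follows essentially the same route as the paper: thicken $B$ to $B_d$ with $d=O(\sqrt{n\log n})$, use the blowing-up corollary (after the XOR-translation reduction) to upgrade the $1/n$ hypothesis to $1-1/n$, apply the weak version to $(A,B_d)$, and pay back $\log(\#B_d/\#B)$ via a Hamming-ball volume bound. The only cosmetic difference is that you bound the ball volume by $\binom{n}{k}\le(en/k)^k$ where the paper uses $\poly(n)2^{n\HH(d/n)}$ together with $\HH(p)=O(p\log(1/p))$; both give the same $O(\sqrt{n}\log^{3/2}n)$ error term.
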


\begin{proof}
As we have seen, the weak combinatorial version (Proposition~\ref{prop:weak-combinatorial}) can be applied to the neighborhood $B_d$ for $d=O(\sqrt{n\log n})$. The size of $B_d$ can be bounded by the size of $B$ multiplied by the size of a Hamming ball of radius $d$. The latter is $\poly(n)2^{n\HH(d/n)}$. Combining the inequalities, we get
\[
\log \# B \ge \log \#B_d - n\HH(d/n)-O(\log n) \ge \beta n - n\HH(d/n)-O(\log n).
\]
For small $p$ we have
\[
\HH(p)=p\log\frac{1}{p}+(1-p)\log \frac{1}{1-p}=p\log\frac{1}{p}+p + o(p)=O\left(p\log\frac{1}{p}\right).
\]
We have $p=d/n=O(\sqrt{\log n/n})$, so 
\[
n\HH(d/n)=nO(\sqrt{\log n/n}\log n)=O(\sqrt{n}\log^{3/2}n),
\]
 as promised.
\end{proof}

\section{Complexity statement}\label{sec:complexity}

Now we combine all pieces and prove Theorem~\ref{thm:main}. It states:
\begin{quote}
Let $\alpha=\HH(p)$ for some $p\le 1/2$. Let $\tau$ be an arbitrary number in $(0,1)$. Let $\beta = \HH(N(p,\tau))$. Then for sufficiently large $n$ the following is true: for every string $x$ of length $n$ with $\KS(x)\ge \alpha n$, we have
$$
\Pr[\KS(N_\tau(x))\ge \beta n - o(n)] \ge 1-\frac{1}{n}.
$$ 
\end{quote}
Here $o(n)$ is actually $O(\sqrt{n}\log^{3/2}n)$.
\smallskip

We already have all the necessary tools for the proof, but some adjustments are needed. We already know how to convert a combinatorial statement into a complexity one. For that we consider the set $B$ of all strings in $\mathbb{B}^n$ that have complexity less than $\beta n - c\sqrt{n}\log^{3/2}n$ for some $c$ (to be chosen later). Then we consider the set $A$ of all $x$ such that $\Pr[N_\tau(x) \in B]\ge 1/n$. The combinatorial statement (strong version, Proposition~\ref{prop:strong-combinatorial}) guarantees that $\# A \le 2^{\alpha n}$. We would like to conclude that all elements of $A$ have complexity only slightly exceeding $\alpha n$. (Then we have to deal with this excess, see below.)  For that we need an algorithm that enumerates $A$. First, we need to enumerate $B$, and for that it is enough to know $n$ and the complexity bound for elements of $B$. But now (unlike the case of arbitrary change where we need to know only the maximal number of allowed changes) we need to compute the probability $\Pr[N_\tau(x)\in B]$, and the value of $\tau$ may not be computable, and an infinite amount of information is needed to specify $\tau$. How can we overcome this difficulty? 

Note that it is enough to enumerate some set that contains $A$ but has only slightly larger size. Consider some rational $\tau'$ that is close to $\tau$ and the set 
\[
A' = \{ x\colon \Pr[N_{\tau'}(x)\in B]> 1/2n\}
\]
The combinatorial statement remains true (as we noted in Remark~\ref{rem:smaller}, even $1/n^c$ would be OK, not only $1/2n$), so we may still assume that $\#A'\le 2^{\alpha n}$. We want $A'\supset A$. This will be guaranteed if the difference between $\Pr[N_\tau(x)\in B]$ and $\Pr[N_{\tau'}(x)\in B]$ is less than $1/2n$. To use the coupling argument, let us assume that $N_{\tau}(x)$ and $N_{\tau'}(x)$ are defined on the same space: to decide whether the noise changes $i$th bit, we generate a fresh uniformly random real in $[0,1]$ and compare it with thresholds $\tau$ and $\tau'$. This comparison gives different results if this random real falls into the gap between $\tau$ and $\tau'$. Using the union bound for all bits, we conclude that $\Pr[N_\tau(x)\ne N_{\tau'}(x)]$ in this setting is bounded by $n|\tau'-\tau|$. Therefore, if the approximation error $|\tau'-\tau|$ is less than $1/2n^2$, we get the desired result, and to specify $\tau'$ that approximates $\tau$ with this precision we need only $O(\log n)$ bits. This gives us the following statement:
\begin{quote}
for every string $x$ of length $n$ with $\KS(x)\ge \alpha n+O(\log n)$, we have
\[
\Pr[\KS(N_\tau(x))\ge \beta n - o(n)] \ge 1-\frac{1}{n}.
\] 
\end{quote}
The only difference with the statement of Theorem~\ref{thm:main} is that we have a stronger requirement $\KS(x)\ge \alpha n +O(\log n)$ instead of $\KS(x)\ge \alpha n$. To compensate for this, we need to decrease $\alpha$ a bit and apply the statement we have proven to $\alpha'=\alpha - O(\log n/n)$. Then the corresponding value of $\beta$ also should be changed, to get a point $(\alpha',\beta')$ on the curve (Figure~\ref{pic:boundary}) on the left of the original point $(\alpha,\beta)$. Note that the slope of the curve is bounded by $1$ (it is the case at the right end where the curve reaches $(1,1)$, since the curve is above the diagonal $\alpha=\beta$, and the slope increases with $\alpha$ due to convexity). Therefore, the difference between $\beta$ and $\beta'$ is also $O(\log n/n)$ and is absorbed by the bigger term $O(\sqrt{n}\log^{3/2}n)$. 

Theorem~\ref{thm:main} is proven.
\medskip

In the next section we apply our technique to get some related results about infinite bit sequences and their effective Hausdorff dimension.  We finish the part about finite strings with the following natural question.
\begin{question}
 Fix some $x$ and apply random noise $N_\tau$. The complexity of $N_\tau(x)$ becomes a random variable. What is the distribution of this variable? The blowing-up lemma implies that it is concentrated around some value. Indeed, if we look at strings below $1\%$-quantile and above $99\%$-quantile, the blowing-up lemma guarantees that the Hamming distance between these two sets is at most $O(\sqrt{n})$, and therefore the thresholds for Kolmogorov complexity differ at most by $O(\sqrt{n}\log n)$ (recall that for two strings of length $n$ that differ in $i$ positions, their complexities differ at most by $O(i\log n)$, since it is enough to add information about $i$ positions and each position can be encoded by $\log n$ bits).
 
So with high probability the complexity of $N_\tau(x)$ is concentrated around some value (defined up to $O(\sqrt{n}\log n)$ precision). For each $\tau$ we get some number (expected complexity, with guaranteed concentration) that depends not only on $n$ and $\KS(x)$, but on some more specific properties of $x$. What are these properties?  Among the properties of this type there exists a Vitanyi--Vereshchagin profile curve for balls, the \emph{minimal} complexity in the neighborhood as function of the radius (see~\cite[section 14.4]{ShenUspenskyVereshchagin}); is it somehow related?

As we have mentioned, this question is open also for \emph{maximal} complexity in $d$-balls around~$x$, not only for \emph{typical} complexity after $\tau$-noise.
\end{question}

\section{Infinite sequences and Hausdorff dimension}\label{sec:infinite}

Let $X=x_1x_2x_3\ldots$ be an infinite bit sequence. The effective Hausdorff dimension of $X$ is defined as
\[
\liminf_{n\to\infty} \frac{\KS(x_1\ldots x_n)}{n}.
\]
A natural question arises: \emph{what happens with the Hausdorff dimension of a sequence when each its bit is independently changed with some probability $\tau$}? The following result states that the dimension increases with probability~$1$ (assuming the dimension was less than $1$, of course), and the guaranteed increase follows the same curve as for finite sequences.

\begin{theorem}\label{thm:dimension-increase}
Let $p,\tau\in (0,1/2)$ be some reals, $\alpha=\HH(p)$ and $\beta=\HH(N(p,\tau))$. Let $X$ be an infinite sequence that has effective Hausdorff dimension at least $\alpha$. Then the effective Hausdorff dimension of the sequence $N_\tau(X)$ that is obtained from $X$ by applying random $\tau$-noise independently to each position, is at least $\beta$ with probability $1$.
\end{theorem}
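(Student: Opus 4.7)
The plan is to bootstrap Theorem~\ref{thm:main} from finite prefixes to the infinite sequence via a Borel--Cantelli argument. The key structural observation is that since the noise $N_\tau$ operates on each coordinate independently, the $n$-bit prefix of $N_\tau(X)$ equals $N_\tau(x_1\ldots x_n)$ (using the first $n$ noise bits), so everything can be phrased in terms of prefixes of $X$ and the finite theorem applied to them.

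First I would fix an arbitrary rational $\alpha' < \alpha$, let $p' \le 1/2$ satisfy $\HH(p') = \alpha'$, and set $\beta' = \HH(N(p',\tau))$. Since the effective Hausdorff dimension of $X$ is at least $\alpha$, there is some $n_0$ such that $\KS(x_1\ldots x_n) \ge \alpha' n$ for every $n \ge n_0$. Now apply Theorem~\ref{thm:main}, but with the strengthened probability bound $1 - 1/n^2$ in place of $1 - 1/n$; this strengthening is available by Remark~\ref{rem:smaller} (at the cost of enlarging the constant in the $o(n)$ term, which does not matter for dimension). This yields, for each $n \ge n_0$,
\[
\Pr\bigl[\KS(N_\tau(x_1\ldots x_n)) \ge \beta' n - o(n)\bigr] \ge 1 - \frac{1}{n^2}.
\]

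Second, because $\sum_n n^{-2} < \infty$, the Borel--Cantelli lemma applied to the complementary events implies that, with probability $1$, only finitely many of the inequalities fail. Hence almost surely $\KS((N_\tau(X))_{1\ldots n}) \ge \beta' n - o(n)$ for all sufficiently large $n$, which gives $\liminf_n \KS((N_\tau(X))_{1\ldots n})/n \ge \beta'$ almost surely. Intersecting this almost-sure event over the countable collection of rational $\alpha' < \alpha$, and using continuity of the map $\alpha \mapsto \beta$ along the curve of Figure~\ref{pic:boundary}, we obtain $\liminf_n \KS((N_\tau(X))_{1\ldots n})/n \ge \beta$ almost surely, which is precisely the statement about the effective Hausdorff dimension.

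The main obstacle is exactly the step that forces us to invoke Remark~\ref{rem:smaller}: the original probability bound $1-1/n$ in Theorem~\ref{thm:main} has divergent complement, so a naive Borel--Cantelli argument would not rule out the complexity of $N_\tau(X)$ dipping below $\beta n - o(n)$ infinitely often. Once the probability is boosted to $1 - 1/n^2$ (or any summable tail), the reduction to prefixes together with a diagonalization over rational $\alpha'$ approaching $\alpha$ is essentially routine; no new combinatorial or information-theoretic input is required beyond what has already been established for finite strings.
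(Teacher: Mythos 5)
Your proposal is correct and follows essentially the same route as the paper: reduce to prefixes, fix a point $(\alpha',\beta')$ on the noise curve with $\alpha'<\alpha$, invoke Theorem~\ref{thm:main} with the probability bound boosted to $1-1/n^2$ (available via Remark~\ref{rem:smaller}), and conclude by Borel--Cantelli, then let $\beta'\to\beta$ over a countable family. Your explicit remarks about prefixes commuting with the noise and about continuity along the curve only make explicit what the paper leaves implicit.
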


\begin{proof}
It is enough to show, for every $\beta'<\beta$, that the dimension of $N_\tau(X)$ is at least $\beta'$ with probability $1$.  Consider $\alpha'<\alpha$ so that the pair $(\alpha',\beta')$ lies on the boundary curve. By definition of the effective Hausdorff dimension, we know that $\KS(x_1\ldots x_n)>\alpha' n$ for all sufficiently large $n$. Then Theorem~\ref{thm:main} can be applied to $\alpha'$ and $\beta'$. It guarantees that with probability at least  $1-1/n$ the changed string has complexity at least $\beta' n - o(n)$. Moreover, as we have said, the same is true with probability at least $1-1/n^2$. This improvement is important for us:  the series $\sum 1/n^2$ converges, so the Borel--Cantelli lemma says that with probability $1$ only finitely many prefixes have complexity less than $\beta'n-o(n)$, therefore the dimension of $N_\tau(X)$ is at least $\beta'$ with probability $1$.
\end{proof}

In the next result we randomly change bits with probabilities depending on the bit position. The probability of change in the $n$th position converges to $0$ as $n\to\infty$. This guarantees that with probability $1$ we get a sequence that is Besicovitch-close to a given one. Recall that the Besicovitch distance between two bit sequences $X=x_1x_2\ldots$ and $Y=y_1y_2\ldots$ is defined as 
$$
\limsup_{n\to\infty}\frac{d(x_1\ldots x_n, y_1\ldots y_n)}{n}
$$
(where $d$ stands for the Hamming distance). So $d(X,Y)=0$ means that the fraction of different bits in the $n$-bit prefixes of two sequences converges to $0$ as $n\to\infty$. The strong law of large numbers implies that if we start with some sequence $X$ and change $i$th bit independently with probability $\tau_i$ with $\lim_n \tau_n=0$, we get (with probability $1$) the sequence $X'$ such that the Besicovitch distance between $X$ and $X'$ is $0$. This allows us to prove the following result using a probabilistic argument. 

\begin{theorem}\label{thm:strong}
Let $X=x_1x_2\ldots$ be a bit sequence whose effective Hausdorff dimension is at least $\gamma$ for some $\gamma<1$. Let $\delta_n$ be a sequence of positive reals such that $\lim_n \delta_n=0$. Then there exists a sequence $X'=x'_1x'_2\ldots$ such that:
\begin{itemize}
\item the Besicovitch distance between $X$ and $X'$ is~$0$;
\item $\KS(x'_1\ldots x'_n)$ is at least $n(\gamma+\delta_n)$ for all sufficiently large~$n$.
\end{itemize}
\end{theorem}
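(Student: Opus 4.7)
I propose a probabilistic construction. Fix $p\in(0,1/2)$ with $\HH(p)=\gamma$, and assume without loss of generality that $\delta_n$ is monotonically decreasing (else replace $\delta_n$ by $\sup_{m\ge n}\delta_m$, which still tends to $0$). Partition the positive integers into blocks $B_k=(n_{k-1},n_k]$ with $n_k\to\infty$ and $n_{k-1}/n_k\to 0$ sufficiently fast, and choose noise levels $\tau_k\to 0$ so that the complexity-increase margin $M_k:=\HH(N(p,\tau_k))-\gamma$ comfortably exceeds both $\delta_{n_{k-1}+1}$ (the largest value of $\delta_n$ on $B_k$, by monotonicity) and the $o(n)/n$ error of Theorem~\ref{thm:main}. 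Since $M_k$ is linear in $\tau_k$ for small $\tau_k$ and $\delta_n\to 0$, one can arrange $\tau_k\to 0$ while retaining this slack. The random sequence is $X'=X\oplus E$, where $E_i$ are independent $\mathrm{Bernoulli}(\tau_k)$ for $i\in B_k$, and I will show both conclusions hold almost surely.

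\textbf{Besicovitch distance.} Because $\tau_k\to 0$, Cesaro's theorem gives $(1/n)\sum_{i\le n}\tau_{k(i)}\to 0$, and Kolmogorov's strong law of large numbers for independent Bernoullis (whose variances summed with weights $1/i^2$ converge) turns this into $(1/n)\#\{i\le n : x'_i\ne x_i\}\to 0$ almost surely. Hence the Besicovitch distance between $X$ and $X'$ is $0$ almost surely.

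\textbf{Complexity bound.} For each large $n$ I plan to show $\Pr[\KS(x'_1\ldots x'_n)<(\gamma+\delta_n)n]\le O(1/n^2)$ and then invoke Borel--Cantelli. Fix $n\in B_{k(n)}$. When $n$ is well into its block (concretely, $n\ge C_k n_{k-1}$ for some $C_k$ depending on $M_k$), consider the suffix $x_{n_{k-1}+1}\ldots x_n$ of length $m=n-n_{k-1}$: the dimension hypothesis gives $\KS(x_{n_{k-1}+1}\ldots x_n)\ge \gamma n - n_{k-1} - O(\log n)$, whose complexity density tends to $\gamma$ as $n_{k-1}/m\to 0$. Applying Theorem~\ref{thm:main} to this suffix with constant noise $\tau_k$ yields, with probability $\ge 1-1/m^2$, that the noised suffix has complexity at least $\HH(N(p',\tau_k))\,m - o(m)$ for some $p'\to p$; combining with $\KS(x'_1\ldots x'_n)\ge \KS(x'_{n_{k-1}+1}\ldots x'_n)-O(\log n)$ gives the desired bound $\ge(\gamma+\delta_n)n$ by the chosen slack in $M_k$. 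For $n$ close to the left boundary $n_{k-1}$, I use the complementary inequality $\KS(x'_1\ldots x'_n)\ge \KS(x'_1\ldots x'_{n_{k-1}})-O(\log n)$ and inherit the margin $\theta_{k-1}$ established at the previous checkpoint, which covers all $n\le (1+\theta_{k-1}/\gamma)n_{k-1}$.

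\textbf{Main obstacle.} The tension between these two regimes is the main technical difficulty: since $M_k\sim\tau_k\to 0$, the ``suffix'' bound only becomes effective once $n$ is at least of order $n_{k-1}/\tau_k$, whereas the ``extend from the left'' bound covers at most $n\le (1+\theta_{k-1}/\gamma)n_{k-1}$ with $\theta_{k-1}\sim\tau_{k-1}$, so a single pair of block boundaries cannot bridge the gap. To resolve this I plan to introduce a hierarchy of intermediate checkpoints inside each coarse block $B_k$, at each of which the inductively maintained complexity bound is reverified by the suffix argument applied on a smaller scale (where a larger $\delta$ is tolerated). By spacing checkpoints geometrically so that each ``extend from the left'' stretch reaches the next checkpoint at which the ``suffix'' bound again applies, one covers every intermediate $n$, and the resulting sum of failure probabilities (dominated by $1/n^2$) remains summable for Borel--Cantelli. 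The remainder is bookkeeping: choosing $(n_k,\tau_k)$ and the intra-block checkpoint spacing consistently with the prescribed $\delta_n$.
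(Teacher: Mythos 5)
Your reduction to constant-noise blocks plus Theorem~\ref{thm:main} applied to suffixes does not close, and the checkpoint hierarchy you propose for your ``main obstacle'' cannot close it. The difficulty is quantitative and unavoidable in this scheme: a suffix argument launched from any position $j$ (whether $j=n_{k-1}$ or an intermediate checkpoint) starts from the unconditional bound $\KS(x_{j+1}\ldots x_n)\ge \gamma n - j - O(\log n)$, whose density is $\gamma-(1-\gamma)j/(n-j)$; after adding the noise margin $M_k$ and converting back to a bound on the full prefix, one needs roughly $M_k\ge \delta_n+(1+o(1))\,j/(n-j)$, i.e.\ $n\gtrsim j/M_k$, before the suffix bound says anything at all. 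Meanwhile ``extending from the left'' from a checkpoint at $j$ carrying margin $\theta\le M_k+o(1)$ covers only $n\le j(\gamma+\theta)/(\gamma+\delta_n)\approx j(1+M_k/\gamma)$. Since $M_k\to0$, the interval $\bigl(j(1+O(M_k)),\ \Omega(j/M_k)\bigr)$ is nonempty for all large $k$ no matter where you place checkpoints, and on it neither bound applies; tolerating a larger $\delta$ at intermediate checkpoints does not help, because the obstruction is the density deficit $(1-\gamma)j/(n-j)$ of the un-noised suffix, not the size of $\delta$.

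The missing idea --- and the route the paper takes --- is to prove the noise theorem itself for position-dependent noise, i.e.\ independent flips with probabilities $\tau_i\ge\tau$, with the conclusion governed by the curve for the minimal value $\tau$. This requires reopening the proof rather than using Theorem~\ref{thm:main} as a black box: the entropy bound survives because every $\tau'$-curve with $\tau'\ge\tau$ lies above the $\tau$-curve, so the tensorization and convexity argument still applies; the two combinatorial steps never used homogeneity of the noise (only independence of coordinates, for McDiarmid); and the enumeration step is handled by taking $\tau_i\in\{2^{-m}\}$ constant on long stretches, so that describing the whole noise schedule up to position $n$ costs only $O(m\log n)$ extra bits. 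With that generalization the bound is applied to the \emph{entire} prefix $x'_1\ldots x'_n$ at once, there is no suffix decomposition and hence no gap to bridge; the remaining ingredients of your proposal (dyadic $\tau\to0$, switching to a smaller value only after the relevant complexity and $\delta_n$ conditions have stabilized, failure probability $1/n^2$ plus Borel--Cantelli, and the strong law of large numbers for the Besicovitch part) do match the paper's proof.
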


\begin{proof}
For this result we use some decreasing sequence $\tau_i\to 0$ and change $i$th bit with probability $\tau_i$. Since $\tau_i\to 0$, with probability $1$ the changed sequence is Besicovitch-equivalent (distance~$0$) to the original one. It remains to prove that the probability of the last claim (the lower bound for complexities) is also $1$ for the changed sequence, if we choose $\tau_i\to 0$ in a suitable way.

To use different $\tau_i$ for different $i$, we have to look again at our arguments. We start with Proposition~\ref{prop:noise-entropy}: the proof (see Appendix B) remains valid if each bit is changed independently with probability $\tau_i\ge \tau$ depending on the bit's position ($i$). Indeed, for every $\tau'\ge\tau$ the corresponding $\tau'$-curve is above the $\tau$-curve, so the pairs of entropies (original bit, bit with noise) are above the $\tau$-curve and we may apply the same convexity argument.

The derivation of the combinatorial statement (first the weak one, then the strong one) also remains unchanged. The proof of the weak version does not mention the exact nature of the noise at all; in the strong version we use only that different bits are independent (to apply the McDiarmid inequality and the blowing-up lemma). The only problem arises when we derive the complexity version from the combinatorial one. In our argument we need to know $\tau$ (or some approximation for $\tau$) to enumerate $A$. If for each bit we have its own value of $\tau$, even one bit to specify this value is too much for us.

To overcome this difficulty, let us agree that we start with $\tau_i=1/2$, then change them to $1/4$ at some point, then to $1/8$ etc. If for $n$th bit we use $\tau_n=2^{-m}$, then to specify all the $\tau_i$ for $i\le n$ we need to specify $O(m\log n)$ bits (each moment of change requires $O(\log n)$ bits). For $\tau=2^{-m}$ we choose a pair $(\alpha,\beta)$ on the $\tau$-curve such that $\alpha<\gamma<\beta$. To decide when we can start using this value of $\tau$, we wait until $\KS(x_1\ldots x_n)>\alpha n + O(m\log n)$ becomes true and stays true forever, and also $\gamma+\delta_n<\beta-O(\sqrt{n}\log^{3/2}n)$ becomes and stays true. Note that $m$ is fixed when we decide when to start using $\tau=2^{-m}$, so such an $n$ can be found. In this way we guarantee that the probability that $x'_1\ldots x'_n$ will have complexity more than $(\gamma+\delta_n)$  is at least $1-1/n^2$ (we need a converging series, so we use the bound with $n^2$), and it remains to apply the Borel--Cantelli lemma.
\end{proof}

Theorem~\ref{thm:strong} implies that for every $X$ that has effective Hausdorff dimension $\alpha$ there exist a Besicovitch equivalent $X'$ that is $\alpha$-random (due to the complexity criterion for $\alpha$-randomness, see~\cite{Greenberg2018}), and we get the result of~\cite[Theorem 2.5]{Greenberg2018} as a corollary. Moreover, we can get this result in a stronger version than in~\cite{Greenberg2018}, since for slow converging sequence $\delta_n$, for example, $\delta_n=1/\log n$, we get \emph{strong} $\alpha$-randomness instead of \emph{weak} $\alpha$-randomness used in~\cite{Greenberg2018}. (For the definition of weak and strong $\alpha$-randomness and for the complexity criteria for them see~\cite[Section 13.5]{DowneyHirschfeldt}.)

\subsection*{Acknowledgements}

Authors are grateful to the participants and organizers of the Heidelberg Kolmogorov complexity program where the question of the complexity increase was raised, and to all colleagues (from the ESCAPE team, LIRMM, Montpellier, Kolmogorov seminar and HSE Theoretical Computer Science Group and other places) who participated in the discussions, in particular to Bruno Bauwens, Noam Greenberg, Konstantin Makarychev, Yury Makarychev, Joseph Miller, Alexey Milovanov, Ilya Razenshteyn, Andrei Romashchenko,  Nikolai Vereshchagin,  Linda Brown Westrick.

Special thanks to Fedor Nazarov who kindly allowed us to include his argument (using Fourier series), and, last but not least, to Peter G\'acs who explained us how the tools from~\cite{Ahlswede1976} can be used to provide the desired result about Kolmogorov complexity.

\section*{Appendix A: Simpler arguments and weaker bounds}\label{sec:weaker}

If we are interested only in \emph{some} increase of entropy and do not insist on the optimal lower bound, some simpler arguments (that do not involve entropy arguments and just prove the combinatorial statement with a weaker bound) are enough. In this section we provide two arguments of this type; the corresponding regions of parameters are shown in Figure~\ref{pic:comparison} (together with the optimal bound of Theorem~\ref{thm:main}).

\begin{figure}[h]
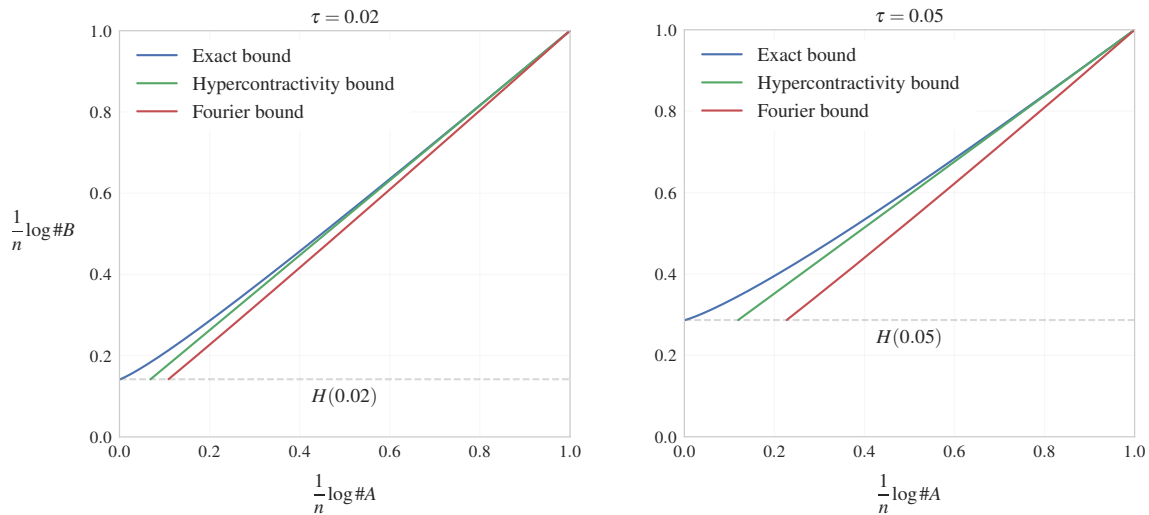

	\begin{subfigure}[b]{0.538\textwidth}
		\centering
		\resizebox{.95\linewidth}{!}{\input{"Bounds_for_tau=0.02.pgf"}}
	\end{subfigure}%
	\begin{subfigure}[b]{0.462\textwidth}
		\centering
		\resizebox{.95\linewidth}{!}{\input{"Bounds_for_tau=0.05.pgf"}}
	\end{subfigure}
	\caption{Bounds that could be obtained by different techniques}\label{pic:comparison}
\end{figure}

\subsection*{Using Fourier analysis}\label{subsec:fourier}

We start with a proof (suggested by Fedor Nazarov\footnote{see \scriptsize \url{http://mathoverflow.net/questions/247193/union-of-almost-hamming-balls}}) of a weak version of Proposition~\ref{prop:strong-combinatorial} showing that for every $\tau$ and every $\beta<1$ there exists some $\alpha<\beta$ such that required bound $\#A \le 2^{\alpha n}$ is valid for every $B$ of size $2^{\beta n}$. 

Every real-valued function on the Boolean hypercube $\B{n}$, identified with $\{-1,1\}^n$ and considered as a multiplicative group in this section, can be written in the standard Fourier basis: 
\[
f(x) = \sum_{S\subset \{1,\ldots,n\}} \widehat{f}_S \chi_{S}(x),
\]
where $\widehat{f}_S$ are Fourier coefficients, and $\chi_{S}(x) = \prod_{i\in S} x_i$. Functions $\chi_{S}$ are characters of the Boolean cube as a multiplicative group. They form an orthonormal basis in the space of real-valued functions on $\B{n}$ with respect to the following inner product:
\[ \langle f, g \rangle = \frac{1}{2^n}\sum_{x\in \B{n}} f(x)g(x) = \E_{x \in \B{n}} f(x)g(x) \]
This Fourier representation will be useful for us, since the representation of the convolution of two functions is the point-wise product of their representations: $\widehat{f*g}_S = \widehat{f}_S\, \widehat{g}_S$, where the convolution is defined as \[(f*g)(x) = \E_{t\in\B{n}} f(xt)g(t^{-1})\]
(in fact, in our case $t^{-1}=t$).

For a set $B\subset \B{n}$ we are interested in the probability 
\[
   \Noise{\tau}^{B}(x) = \Pr[N_{\tau}(x)\in B].
\]
This function is a convolution of the indicator function $\ind{B}$ of the set $B$ (equal to $1$ inside the set and $0$ outside) and the distribution of the noise, multiplied by $2^n$ (since we divide by $2^n$ when computing the expectation): 
\[\Noise{\tau}^{B} = \ind{B} * f,\]
where $f(x)=2^n \Pr[N_\tau(\ones)=x]$. Here $\ones\in\B{n}$ is the unit of the group, i.e., $\ones=(1,1,\ldots,1)$. The Fourier coefficient $\widehat{f}_S$ is easy to compute:
\[
\widehat{f}_S = \langle f, \chi_S \rangle = \E_{x\in\B{n}} f(x)\chi_S(x), 
\]
 and both functions $f$ and $\chi_S$ are products of functions depending on one coordinate:
\[
  f(x_1,\dotsc,x_n)=g(x_1) \dotsm g(x_n)
\]
where $g(1)=2(1-\tau)$ and $g(-1)=2\tau$,  and
\[
  \chi_S(x_1,\dotsc,x_n)=\chi_1(x_1) \dotsm \chi_n(x_n),
\]
where $\chi_i$ is constant $1$ if $i\notin S$, and $\chi_i(x)=x$ for $i\in S$. Due to independence, the expectation of the product is a product of expectations; they are $1$ for $i\notin S$ and $1-2\tau$ for $i\in S$, so
\[
  \widehat{f}_S=(1-2\tau)^{\#S}
\]
In other terms, noise (convolution with $f$) decreases the $S$-th coefficient of the Fourier transform by multiplying it by $(1-2\tau)^{\#S}$. We need to apply noise to the indicator function of $B$ that we denote by $b = \ind{B}$, and get a bound for the number of points where $b * f$ exceeds~$1/n$.

 Why $b*f$ cannot be relatively large (greater than $1/n$) on a large set $A$? We know that
\[
(b*f)(x)=\sum_S (1-2\tau)^{\#S}\,\widehat{\hbox{$b$}}_S \chi_S(x).
\]
This sum consists of $2^n$ terms (its elements form a vector of length $2^n$) and can be split into two parts: for ``small'' $S$, where $\#S< d$, and for ``large'' $S$, where $\#S\ge d$. Here $d$ is some threshold to be chosen later in such a way that the first part (for small $S$) does not exceed, say $1/2n$ for all $x$. Then the second part should exceed $1/2n$ everywhere on $A$, and this makes the $L_2$-norm of the second part (as a vector of the corresponding coefficients) large, while all coefficients in the second part are multiplied by small factors not exceeding $(1-2\tau)^d$.

How should we choose the threshold $d$? The coefficient $\widehat{\hbox{$b$}}_\varnothing$ equals $\mu(B)$, the uniform measure of $B$, and for all other coefficients we have $|\widehat{\hbox{$b$}}_S|\le\mu(B)$. The size (the number of terms) in the first part is the number of sets of cardinality less than $d$, and is bounded by $\poly(n)2^{n\HH(d/n)}$. Therefore, if we choose $d$ in such a way that
\[
 \mu(B)\poly(n)2^{n\HH(d/n)} \le \frac{1}{2n},
\]
we achieve our goal; the first part of the sum never exceeds $1/(2n)$.

Now the second part: compared to the same part of the sum for $b(x)$, we have all coefficients multiplied by 
$(1-2\tau)^d$ or smaller coefficients, so the $L_2$-norm of this part is bounded: 
\[
 \|\text{second part}\|_{2}\le (1-2\tau)^d\|b\|_2=(1-2\tau)^d \sqrt{\mu(B)}.
\]
On the other hand, if the second part exceeds $1/(2n)$ inside $A$, we have the lower bound: 
\[
 \|\text{second part}\|_{2}\ge \sqrt{\mu(A)}/(2n).
\]
In this way we get
\[
\sqrt{\mu(A)}/(2n) \le (1-2\tau)^d \sqrt{\mu(B)},
\]
or
\[
\mu(A) \le 4n^2 (1-2\tau)^{2d}\mu(B)
\]
where $d$ is chosen in such a way that 
\[
\mu(B)\le 2^{-n\HH(d/n)}/\poly(n)
\]
For $\#B = 2^{\beta n}$ we have $\HH(d/n)\approx 1-\beta$ and 
\[
\#A \le (1-2\tau)^{2d} 2^{\beta n} 
\]
We see that the first term gives an exponentially small factor since $d$ is proportional to $n$:
\[
d/n \approx \HH^{-1}(1-\beta)
\]
(here $\HH^{-1}(\gamma)$ is the preimage of $\gamma$ between $0$ and $1/2$). So we get the required bound for some $\alpha<\beta$ as promised.

\subsection*{Using hypercontractivity}

We can get a better bound using two-functions hypercontractivity inequality for uniform bits, whose proof can be found in \cite[chapter~10]{ODonnell2014}:
\begin{proposition}[Two-function hypercontractivity inequality]\label{prop:two_f_hypercontractivity}
	Let $f, g: \B{n} \to \mathbb{R}$, let $r, s \ge 0$, and assume $0\le 1-2\tau \le \sqrt{rs} \le 1$. Then
	\[
	\E_{\substack{x\in\B{n} \\ y=\Noise{\tau}(x)}}\!\! [f(x)g(y)] \le \|f\|_{1+r}\, \|g\|_{1+s}
	\]
\end{proposition}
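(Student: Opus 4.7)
The plan is to reduce the two-function inequality to the classical one-function (Bonami--Beckner--Gross) hypercontractivity inequality by combining it with Hölder's inequality. Let $\rho = 1-2\tau$ and let $T_\rho$ denote the noise operator on $\B{n}$ defined by $(T_\rho g)(x) = \E_{y = \Noise{\tau}(x)}[g(y)]$. The left-hand side of the inequality is an $L^2$ inner product with respect to the uniform measure on $\B{n}$:
$$\E_{x,\, y = \Noise{\tau}(x)}[f(x) g(y)] = \E_x[f(x)\,(T_\rho g)(x)] = \langle f, T_\rho g \rangle.$$

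Next I would apply Hölder's inequality with the conjugate pair $1+r$ and $(1+r)/r$, getting
$$\langle f, T_\rho g \rangle \;\le\; \|f\|_{1+r}\,\|T_\rho g\|_{(1+r)/r}.$$
It then suffices to show $\|T_\rho g\|_{(1+r)/r} \le \|g\|_{1+s}$, which is precisely the one-function hypercontractivity statement $\|T_\rho g\|_q \le \|g\|_p$ with $p = 1+s$ and $q = (1+r)/r$. The standard sharp side condition for that inequality is $\rho \le \sqrt{(p-1)/(q-1)}$; since $p-1 = s$ and $q-1 = 1/r$, this is exactly $\rho \le \sqrt{rs}$, matching our hypothesis $1-2\tau \le \sqrt{rs}$. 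The complementary requirement $1 \le p \le q$ translates to $rs \le 1$, which is the other half of the assumption $\sqrt{rs} \le 1$.

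So everything reduces to the one-function hypercontractivity inequality $\|T_\rho g\|_q \le \|g\|_p$. This is proved in two stages. First, the \emph{two-point inequality}: establish the case $n=1$ by writing a function on $\{-1,+1\}$ as $a + bx$, noting that $T_\rho$ sends it to $a + \rho b x$, and verifying $\E[|a+\rho b x|^q]^{1/q} \le \E[|a+bx|^p]^{1/p}$ under the sharp constraint on $\rho$. After homogenizing (setting $a = 1$) this reduces to a one-variable calculus problem in the ratio $b/a$; differentiating with respect to $\rho$ at fixed $b$ and checking the sign pins down the sharp constant $\sqrt{(p-1)/(q-1)}$. Second, \emph{tensorize}: assuming the inequality holds on $\B{1}$ and on $\B{n-1}$, deduce it on $\B{n}$ by writing $T_\rho$ as a tensor product of single-coordinate noise operators and applying Minkowski's integral inequality to interchange the iterated $L^p$ and $L^q$ norms along the last coordinate.

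The main obstacle is the two-point inequality at $n=1$: it is short, but genuinely analytic, and the sharp constant emerges only after a careful calculation in which one must track the correct sign of a derivative. Once that is in hand, the tensorization step and the Hölder reduction above are essentially routine bookkeeping, and in particular the two-function version of the statement follows without any additional analytic input.
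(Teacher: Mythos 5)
The paper does not actually prove this proposition---it cites \cite[chapter~10]{ODonnell2014} and uses the statement as a black box---so there is no in-paper argument to compare against; what you give is essentially the standard textbook derivation, and your reduction is correct. The bookkeeping checks out: with $\rho=1-2\tau$ the left-hand side equals $\E_x[f(x)\,(T_\rho g)(x)]$, H\"older with exponents $1+r$ and $(1+r)/r$ leaves you needing $\|T_\rho g\|_q\le\|g\|_p$ for $p=1+s$, $q=(1+r)/r$, and indeed $\sqrt{(p-1)/(q-1)}=\sqrt{rs}$ while $p\le q$ is equivalent to $rs\le 1$, so the hypotheses of the proposition are exactly what the one-function Bonami--Beckner theorem requires. (The degenerate case $r=0$ forces $\tau=1/2$ and $q=\infty$, which is harmless.) The one point where your write-up falls short of a proof is the two-point inequality: you correctly identify it as the sole piece of genuine analytic content and correctly describe the reduction to a one-variable calculus problem for $a+bx\mapsto a+\rho bx$, but you do not carry out the computation that produces the sharp constant $\sqrt{(p-1)/(q-1)}$, and that computation is not routine (for general $p\le q$ it is noticeably more delicate than the $(2,q)$ case). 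As written, your argument is a correct and complete reduction of the two-function statement to the classical one-function hypercontractivity theorem, plus an accurate sketch of how that theorem is proved; to be self-contained it would need the two-point lemma proved in full, or an explicit citation for it, which is effectively what the paper itself chose to do by deferring the whole proposition to \cite{ODonnell2014}.
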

Here the distribution of $x$ is the uniform distribution in $\B{n}$, and $y$ is obtained from $x$ by applying $\tau$-noise: $y = \Noise{\tau}(x)$. The same distribution can be obtained in a symmetric way, starting from $y$. The notation $\|\cdot\|_p$ denotes $L_p$-norm: 
\[
\| u \|_p = \left( \mathbb{E} |u^p| \right)^{1/p}.
\]

How do we apply this inequality? For an arbitrary set $B$ we consider the set 
     \[A = \{x: \Pr[\Noise{\tau}(x) \in B] \ge \eps \}.\]
Let $a, b$ be the indicator functions of $A$ and $B$. Then Proposition~\ref{prop:two_f_hypercontractivity} gives
 \[
\E[a(x)b(y)] = \Pr[x\in A, y\in B] \ge \Pr[x\in A] \Pr[y\in B | x\in A] \ge \mu(A) \eps.
\]
Now we write down the hypercontractivity inequality (note that $\|\ind{X}\|_{q} = \mu(X)^{1/q}$):  
\begin{align*}
\eps \mu(A) &\le \mu(A)^{1/(1+r)} \mu(B)^{1/(1+s)} \\
\log\eps + \log\mu(A) &\le \frac{\log\mu(A)}{1+r} + \frac{\log\mu(B)}{1+s} \\
\log\mu(A) &\le \frac{1+r}{r(1+s)}\log\mu(B)-\frac{1+r}{r}\log\eps.
\end{align*}
This is true for every $r,s$ with $\sqrt{rs}\ge 1-2\tau$. To get the strongest bound we minimize the right hand side, so we use (for given $r$) the minimal possible value of $s = (1-2\tau)^2/r$:
\[ \log\mu(A) \le \frac{1+r}{r+(1-2\tau)^2}\log\mu(B) - \frac{1+r}{r}\log\eps. \]
If $\eps = 1/\poly(n)$, we can set $r\to 0$ at the appropriate rate (say, $r=1/\log n$), so that the last term is still $\littleO(n)$, and we finally get:
\begin{equation*}
\begin{gathered}
\log \mu(A) \le \frac{1}{(1-2\tau)^2}\log\mu(B) +\littleO(n)\\
\log \# A   \le -\left( (1-2\tau)^{-2}-1 \right) n+ (1-2\tau)^{-2} \log \# B + \littleO(n)
\end{gathered}
\end{equation*}

\section*{Appendix B. Entropy statement and its proof}

For the reader's convenience we reproduce here the proof of Proposition~\ref{prop:noise-entropy} (following Wyner and Ziv~\cite{Wyner1973}). Let us recall what it says.

\begin{quote}
Let $P$ be an arbitrary random variable with values in $\mathbb{B}^n$, and let $P'$ be its noisy version obtained by applying $N_\tau$ independently to each bit in $P$. Choose $p\le 1/2$ in such a way that $\HH(P)=n\HH(p)$. Then consider $q=N(p,\tau)$, the probability to get $1$ if we apply $N_\tau$ to a variable that equals $1$ with probability $p$. Then $\HH(P')\ge n\HH(q)$.
\end{quote}

As we have mentioned, we start with the bound for $n=1$ and then extend it to all $n$ (``one-letter characterization'', ``tensorization'').

Let us consider a more general setting. Let $X$ and $Y$ be finite sets. Consider some stochastic transformation $T\colon X\to Y$: for every $x\in X$ we have some distribution $T(x)$ on $Y$. Then, for every random variable $P$ with values in $X$, we may consider the random variable $T(P)$ with values in $Y$. (In other words, we consider a random variable with values in $X\times Y$ whose marginal distribution on $X$ is $P$ and conditional distribution $Y\cnd X$ is $T$.) For a fixed $T$ (our main example is adding noise) we are interested in the relation between the entropies of $P$ and $T(P)$ for arbitrary $P$. In other words, we consider the set of all pairs $(\HH(P),\HH(T(P)))$ for all possible $X$-valued random variables $P$. It is a subset of the rectangle $[0,\log\#X]\times[0,\log\#Y]$. We denote this set by $S(T)$. The following lemma shows that for a product of two independent transformations $T_1\colon X_1\to Y_1$ and $T_2\colon X_2\to Y_2$ this set can be bounded in terms of the correspoding sets for $T_1$ and $T_2$.

\begin{lemma}\label{lem:tensor}
Let $T_1\colon X_1\to Y_1$ and $T_2\colon X_2\to Y_2$ be two stochastic transformations, and let $T_1\times T_2\colon X_1\times X_2\to Y_1\times Y_2$ be their product \textup(independent transformation of both coordinates\textup). Then every point $(u,u')$ in $S(T_1\times T_2)$ is above a sum of some point in $S(T_1)$ and some convex combination of points in $S(T_2)$.
\end{lemma}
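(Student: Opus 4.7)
The plan is to apply the chain rule for entropy twice and combine it with one conditional-independence observation. Let $P$ be a random variable on $X_1 \times X_2$ realizing the point $(u, u') = (\HH(P), \HH((T_1 \times T_2)(P))) \in S(T_1 \times T_2)$. Decompose $P = (P_1, P_2)$ and set $Q_i = T_i(P_i)$, so that $(T_1 \times T_2)(P) = (Q_1, Q_2)$. The chain rule gives
\[
u = \HH(P_1) + \HH(P_2 \cnd P_1), \qquad u' = \HH(Q_1) + \HH(Q_2 \cnd Q_1).
\]

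By construction $(\HH(P_1), \HH(Q_1)) \in S(T_1)$, so the first summands already form a point of $S(T_1)$. For the second summands, observe that for every $x_1 \in X_1$ the conditional distribution of $P_2$ given $P_1 = x_1$ is a random variable on $X_2$, and $T_2$ sends it to the conditional distribution of $Q_2$ given $P_1 = x_1$; hence the pair $(\HH(P_2 \cnd P_1 = x_1), \HH(Q_2 \cnd P_1 = x_1))$ belongs to $S(T_2)$ for every $x_1$. Averaging with the weights $\lambda_{x_1} = \Pr[P_1 = x_1]$ gives
\[
\sum_{x_1} \lambda_{x_1}\, (\HH(P_2 \cnd P_1 = x_1), \HH(Q_2 \cnd P_1 = x_1)) = (\HH(P_2 \cnd P_1), \HH(Q_2 \cnd P_1)),
\]
which is a convex combination of points of $S(T_2)$.

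The crucial step is then to compare $\HH(Q_2 \cnd P_1)$ with the quantity $\HH(Q_2 \cnd Q_1)$ that actually appears in $u'$. Because $T_1$ and $T_2$ act on disjoint coordinates with independent randomness, given $P_1$ the variable $Q_1$ is a function of $P_1$ and its own noise, while $Q_2$ is a function of $P_2$ and its own independent noise; thus $Q_1$ and $Q_2$ are conditionally independent given $P_1$, yielding $\HH(Q_2 \cnd P_1, Q_1) = \HH(Q_2 \cnd P_1)$. Since additional conditioning never increases entropy, $\HH(Q_2 \cnd Q_1) \ge \HH(Q_2 \cnd P_1, Q_1) = \HH(Q_2 \cnd P_1)$. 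Plugging this back into the chain rule expression for $u'$,
\[
(u, u') \ge (\HH(P_1), \HH(Q_1)) + (\HH(P_2 \cnd P_1), \HH(Q_2 \cnd P_1))
\]
coordinate-wise (with equality in the first coordinate), and the right-hand side is exactly the sum of a point of $S(T_1)$ and a convex combination of points of $S(T_2)$, as required.

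The only real subtlety is the last step: a direct decomposition of $\HH(Q_2 \cnd Q_1)$ according to the values of $Q_1$ does \emph{not} produce pieces belonging to $S(T_2)$, because conditioning on $Q_1 = y_1$ distorts the distribution of $P_2$ in a way that couples it to the noise of $T_1$. One must refine the conditioning to $P_1$; only then does independence of the two noise sources give $Q_1$ and $Q_2$ conditionally independent, and only then does the inequality $\HH(Q_2 \cnd Q_1) \ge \HH(Q_2 \cnd P_1)$ run in the direction we need.
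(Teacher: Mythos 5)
Your proof is correct and follows essentially the same route as the paper: the same chain-rule decomposition, the same identification of the convex combination via conditioning on the values of $P_1$, and the same key inequality $\HH(Q_2\cnd Q_1)\ge\HH(Q_2\cnd P_1)$ from the conditional independence of $Q_1$ and $Q_2$ given $P_1$. The only cosmetic difference is that you derive that inequality via $\HH(Q_2\cnd Q_1)\ge\HH(Q_2\cnd Q_1,P_1)=\HH(Q_2\cnd P_1)$, while the paper phrases it through the mutual-information identity $\HH(Q_2\cnd Q_1)-\HH(Q_2\cnd P_1)=I(Q_2\,{:}\,P_1\cnd Q_1)-I(Q_1\,{:}\,Q_2\cnd P_1)$; these are equivalent.
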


Here ``above'' means ``can be obtained by increasing the second coordinate'', and a convex combination is a linear combination with non-negative coefficients that have sum $1$.

\begin{proof}
Consider some random variable $(P_1,P_2)$ with values in $X_1\times X_2$; the components $P_1$ and $P_2$ can be dependent. Then 
\[
\HH(P_1,P_2)=\HH(P_1)+\HH(P_2\cnd P_1).
\]
This is the first coordinate of a pair in question; the second coordinate is the entropy of the variable
\(
(T_1\times T_2)(P_1,P_2);
\)
its components $Q_1$ and $Q_2$ are dependent and have (marginal) distributions $T_1(P_1)$ and $T_2(P_2)$. The second coordinate of the pair is then
\[
\HH(Q_1,Q_2)=\HH(Q_1)+\HH(Q_2\cnd Q_1).
\]
We may consider all four variables $P_1,P_2,Q_1,Q_2$ as defined on the same space that is a product of three spaces: the space where $(P_1,P_2)$ is defined, the space used in the stochastic transformation of $P_1$ and the space used in the stochastic transformation of $P_2$. Now we see that the pair we are interested in is a sum of two pairs:
\[
(\HH(P_1,P_2),\HH(Q_1,Q_2))=(\HH(P_1),\HH(Q_1))+(\HH(P_2\cnd P_1),\HH(Q_2\cnd Q_1)).
\]
The first pair $(\HH(P_1),\HH(Q_1))$ is in $S(T_1)$ by definition. The second pair, as we will show, is above $(\HH(P_2\cnd P_1), \HH(Q_2\cnd P_1))$. After that we note that, by definition, the conditional entropy with condition $P_1$ is a convex combination of conditional entropies with conditions $P_1=x$ for all $x\in X_1$, and all pairs $(\HH(P_2\cnd P_1=x),\HH(Q_2\cnd P_1=x))$ are in $S(T_2)$, since for every $x$ the distribution $(Q_2\cnd P_1=x)$ is obtained by applying $T_2$ to the distribution $(P_2\cnd P_1=x)$.

It remains to show that 
\[
\HH(Q_2\cnd Q_1) \ge \HH(Q_2\cnd P_1).
\]
This is true because $Q_1$ and $Q_2$ are independent given $P_1$: the difference $\HH(Q_2\cnd Q_1)-\HH(Q_2\cnd P_1)$ is equal to $I(Q_2\,{:}\,P_1\cnd Q_1)-I(Q_1\,{:}\,Q_2\cnd P_1)$, and the second term is zero due to the conditional independence.
\end{proof}

This lemma obviously generalizes for the product of several stochastic transformations. For the noise case in $\B{n}$ we consider a product of $n$ copies of ``one-letter'' transformation $N_\tau$ that maps $0$ to $1$ with probability $\tau$ and vice versa.\footnote{We use the same notation $N_\tau$ for the one-bit transformation that we used before for applying noise to all bits of some $n$-bit string, since the meaning is clear from the context.}

\begin{lemma}\label{lem:convex}
The set $S(N_\tau)$ for the one-letter transformation $N_\tau$ is a curve in the unit square that starts at $(0,\HH(\tau))$ and ends at $(1,1)$. This curve is increasing and convex.
\end{lemma}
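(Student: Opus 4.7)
The plan is to parametrize $S(N_\tau)$ by $p\in[0,1/2]$. Any random variable on $\mathbb{B}$ is Bernoulli, and the map $p\mapsto(\HH(p),\HH(N(p,\tau)))$ is invariant under $p\leftrightarrow 1-p$ (since $\HH(p)=\HH(1-p)$ and $N(1-p,\tau)=1-N(p,\tau)$), so the image is exactly the arc traced out by $p\in[0,1/2]$. The endpoints $p=0$ and $p=1/2$ give $(0,\HH(\tau))$ and $(1,1)$. On this range $N(p,\tau)=\tau+(1-2\tau)p$ is strictly increasing in $p$ and lies in $[\tau,1/2]$, so both coordinates $\HH(p)$ and $\HH(N(p,\tau))$ are strictly increasing in $p$; hence the curve is the graph of a strictly increasing function $\beta=g(\alpha)$.

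For convexity I would linearize the noise via $u:=1-2p$, so that $1-2N(p,\tau)=(1-2\tau)u$. A direct Taylor computation (combining the series for $(1-u)\ln(1-u)$ and $(1+u)\ln(1+u)$, whose odd-degree parts cancel) gives
\[
\HH(p)=1-\frac{G(u^{2})}{2\ln 2},\qquad G(t):=\sum_{m\ge 1}\frac{t^{m}}{m(2m-1)}.
\]
Setting $s:=(1-2\tau)^{2}\in(0,1]$, $t:=u^{2}\in[0,1]$, and $A(t):=1-G(t)/(2\ln 2)$, one has $\alpha=A(t)$ and $\beta=A(st)$, with $A$ continuous and strictly decreasing from $A(0)=1$ to $A(1)=0$, so on $(0,1)$
\[
\frac{d\beta}{d\alpha}=\frac{sA'(st)}{A'(t)}=s\,\frac{R(st)}{R(t)},\qquad R(t):=G'(t)=\sum_{m\ge 0}\frac{t^{m}}{2m+1}.
\]
Convexity of $g$ is equivalent to this ratio being non-decreasing in $\alpha$, and since $\alpha$ decreases in $t$ this reduces to the claim that $t\mapsto sR(st)/R(t)$ is non-increasing on $(0,1)$.

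The main obstacle, and in fact the only nontrivial step, is a \emph{log-convexity} property of $R$. I would use the integral representation
\[
R(t)=\int_{0}^{1}\frac{dx}{1-tx^{2}},
\]
obtained by interchanging summation and integration in the geometric series $1/(1-tx^{2})=\sum_{m\ge 0}(tx^{2})^{m}$. For each fixed $x\in[0,1]$, the logarithm of the integrand is $-\log(1-tx^{2})$, which is convex in $t$; hence $t\mapsto 1/(1-tx^{2})$ is log-convex. Since log-convexity is preserved under pointwise sums and under integration (a standard Hölder argument, $\int f^{\lambda}g^{1-\lambda}\le(\int f)^{\lambda}(\int g)^{1-\lambda}$), $R$ itself is log-convex on $[0,1)$. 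Therefore $(\log R)'=R'/R$ is non-decreasing on $[0,1)$; combining this with $(\log R)'\ge 0$ (because $R>0$ and $R$ is strictly increasing) and $s\in[0,1]$ yields
\[
\frac{d}{dt}\log\!\bigl(sR(st)/R(t)\bigr)=s\,(\log R)'(st)-(\log R)'(t)\le 0,
\]
completing the convexity argument on $(0,1)$. Convexity on the closed interval (including the endpoint $\alpha=0$, where $R(1)=+\infty$ makes the slope $0$) then follows by continuity of $g$.
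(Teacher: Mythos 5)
Your proof is correct, and while the setup coincides with the paper's (parametrize by $p\in[0,1/2]$, identify the endpoints, write the slope as a ratio of derivatives, and reduce convexity to monotonicity of that ratio), the key step is handled by a genuinely different argument. The paper writes the slope as $\HH'(p')/\HH'(p)$ with $\HH'(p)=\ln\frac{1-p}{p}$, expands $\ln\frac{1+v}{1-v}$ into its odd-power series, and interprets the resulting ratio as the center of gravity of masses $v,\tfrac13 v^3,\tfrac15 v^5,\dots$ placed at the decreasing coordinates $c,c^3,c^5,\dots$; monotonicity then follows from a mass-shifting lemma that the paper only sketches (``can be easily proven by induction''). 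You instead pass to $t=u^2$, observe that everything is governed by $R(t)=\sum_{m\ge 0}t^m/(2m+1)=\int_0^1 dx/(1-tx^2)$ (note $R(v^2)=\frac{1}{2v}\ln\frac{1+v}{1-v}$, so your ratio $sR(st)/R(t)$ and the paper's ratio differ only by the constant factor $1-2\tau$, which is immaterial for monotonicity), and deduce that $t\mapsto sR(st)/R(t)$ is non-increasing from log-convexity of $R$, which in turn follows from the integral representation and H\"older. Each step of yours — the Taylor identity $\HH(p)=1-G(u^2)/(2\ln 2)$, the log-convexity of $1/(1-tx^2)$, its preservation under integration, and the final inequality $s(\log R)'(st)\le (\log R)'(t)$ using both $s\le 1$ and $(\log R)'\ge 0$ — checks out, and the continuity argument at the endpoint $\alpha=0$ (where the slope degenerates to $0$ because $R(1)=+\infty$) is fine. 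What your route buys is a fully self-contained proof resting on a standard, citable fact (log-convexity is closed under mixtures) rather than an ad hoc induction; what the paper's route buys is a more elementary, calculation-light picture. One cosmetic caveat: your claim that $N(p,\tau)$ is increasing in $p$ and lies in $[\tau,1/2]$ presumes $\tau\le 1/2$; for $\tau>1/2$ the second coordinate is still increasing (since $\HH$ decreases on $[1/2,1]$ and $N(p,\tau)$ decreases toward $1/2$), and your $s=(1-2\tau)^2$ formulation covers that case anyway, so nothing breaks.
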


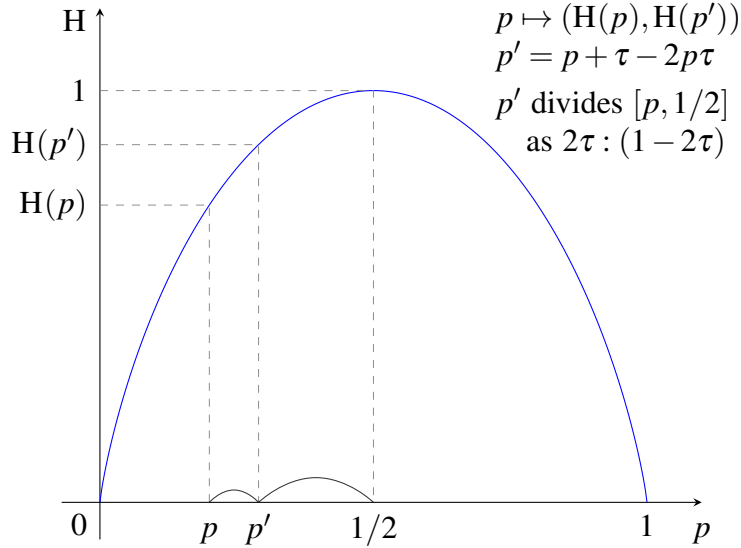
\begin{figure}
	\begin{center}
		\begin{tikzpicture}
		\begin{axis}[
		width=10cm,
		axis line style={black, opacity=1, ->},
		axis lines=middle,
		xmin=-0.07, xmax=1.1,
		ymin=-0.09, ymax=1.2,
		enlarge x limits=false,
		enlarge y limits=false,
		clip=false,
		ytick=\empty,
		xtick=\empty,
		] 
		\addplot[blue,smooth] expression[domain=0:1,samples=300] {-x*log2(x)-(1-x)*log2(1-x)};
		\node[below left] at (axis cs:0,0) {$0$};
		\draw [dashed, opacity=0.9,help lines] (axis cs:0,1) -| (axis cs:0.5,0);
		\node[below] at (axis cs:0.5,-0.003) {$1/2$};
		\node[below] at (axis cs:1,-0.003) {$1$};
		\node[left] at (axis cs:0,1) {$1$};
		\draw [dashed, opacity=0.9,help lines] (axis cs:0,0.7219280948873623) -| (axis cs:0.2,0);
		\node[above] at (axis cs:0.2,-0.127) {$p$};
		\node[left] at (axis cs:0,0.7219280948873623) {$\HH(p)$};
		\draw [dashed, opacity=0.9,help lines] (axis cs:0,0.8687212463394045) -| (axis cs:0.29,0);
		\node[above] at (axis cs:0.29,-0.1295) {$p'$};
		\node[left] at (axis cs:0,0.8687212463394045) {$\HH(p')$};
		\node[above] at (axis cs:1.1,-0.127) {$p$};
		\node[left] at (axis cs:0,1.17) {$\HH$};
		\addplot[black,tension=1,opacity=0.8,smooth] coordinates {(0.2,0) (0.245,0.03) (0.29,0)};
		\addplot[black,tension=1,opacity=0.8,smooth] coordinates {(0.29,0) (0.395,0.06) (0.5,0)};
		\node[right, align=left, fill=white] at (axis cs:0.7,1.13) {$p \mapsto (\HH(p), \HH(p'))$ \\ $p' = p+\tau-2p\tau$};
		\node[right, align=left] at (axis cs:0.7,0.92) {$p'$ divides $[p, 1/2]$ \\ \quad as $2\tau:(1-2\tau)$};
		\end{axis}
		\end{tikzpicture}
	\end{center}
	\caption{Obtaining a point $(\HH(p),\HH(p'))$ on the curve.}
\end{figure}

This curve is shown (for six specific values of $\tau$) in Figure~\ref{pic:boundary}.

\begin{proof}[Proof of Lemma~\ref{lem:convex}]
This is an exercise in elementary calculus; still we provide the sketch of a proof. The curve in question is the image of the mapping 
\[
p \mapsto (\HH(p),\HH(p')),
\]
where $p'=N(p,\tau)=p+\tau-2p\tau$, the probability to get $1$ if we choose $1$ with probability $p$ and then change the result with probability $\tau$ (independently). The point $p'$ divides the interval $[p,1/2]$ as $\,2\tau: (1-2\tau)$. Fix $\tau$, and let $p$ increase with constant speed from $0$ to $1/2$. Then $p'$ also increases with constant speed from $\tau$ to $1/2$, and the point $(\HH(p),\HH(p'))$ moves from left to right starting at $(0,\HH(\tau))$ and finishing at $(1,1)$ (when $p=1/2$, we have $p'=1/2$). Then the point goes back along the same curve, so we consider only $p\in(0,1/2)$. To show that the curve is convex, we need to check that its slope increases from left to right (as $p$ increases). Both points $p$ and $p'$ move with constant speeds, so the slope is proportional to the ratio $\HH'(p')/\HH'(p)$, where $\HH(p)=-p\log p -(1-p)\log(1-p)$. To compute the derivative $\HH'(p)$, we may replace the binary logarithms by natural ones (this does not change the ratio of derivatives). The derivative of $p\ln p$ is $\ln p+1$, so
\[
\HH'(p)= -\ln p - 1 + \ln (1-p) + 1 = \ln\left(\frac{1-p}{p}\right).
\]
For computations, it is convenient to shift the origin and let $p=\frac{1}{2}+u$, then 
\[
\HH'\left(\tfrac{1}{2}+u\right)=\ln\left(\frac{1-2u}{1+2u}\right).
\]
In this new coordinates $p'$ corresponds to $u'$ that is proportional to $u$, i.e., $u'=cu$, where $c$ is a constant ($c=1-2\tau$). We need to show that $\HH'\left(\frac{1}{2}+u'\right)/\HH'\left(\frac{1}{2}+u\right)$ increases as $u$ increases from $-1/2$ to $0$. Letting $u=-v/2$, we need to show that 
\[
\ln\left(\frac{1+cv}{1-cv}\right)/\ln\left(\frac{1+v}{1-v}\right)
\]
increases as $v$ decreases from $1$ to $0$. Using the series
\[
\ln\left(\frac{1+v}{1-v}\right)=\ln(1+v)-\ln(1-v)=2(v+\tfrac{1}{3}v^3+\tfrac{1}{5}v^5+\ldots),
\]
we can reformulate our statement as follows: the ratio
\[
\frac{
  c\cdot v+c^3 \cdot \tfrac{1}{3}v^3+c^5 \cdot \tfrac{1}{5}v^5+\ldots
}{
  v+ \tfrac{1}{3}v^3+\tfrac{1}{5}v^5+\ldots
}
\]
decreases as $v$ increases from $0$ to $1$. This ratio is a center of gravity for points having (decreasing) coordinates $c,c^3,c^5,\ldots$ and masses $v, \frac{1}{3}v^3, \frac{1}{5}v^5,\ldots$. When $v$ is small, the first mass ($v$) is the most important (others are much smaller); as $v$ increases, the other masses become more and more important.  The sequence of coordinates decreases, so the center of gravity moves to the left as required. To say it a bit more formally, we note that the ratio of the first mass ($v$) and the rest ($\frac{1}{3}v^3+\frac{1}{5}v^5+\ldots$) decreases as $v$ increases, so the center of gravity become closer to the center of gravity for the system without the first mass, and the latter also moves to the right for similar reasons. To make the formal inductive proof, we need to prove the similar statement for finitely many masses and then consider the limit.\footnote{In general, we use the following monotonicity statement: \emph{if the coordinates of points are $x_1>x_2>\ldots>x_n$ and the masses $m_1,\ldots,m_n$ are changed in such a way that new masses $m_i'$ satisfy the inequality $m_j'/m_i'>m_j/m_i$ for $j>i$, then the center of gravity moves to the left after the change}. This can be easily proven by induction over $n$, following the scheme explained above.}
\end{proof}

Lemma~\ref{lem:convex} shows that the set of points of the unit square above $S(N_\tau)$ is convex. Therefore, applying Lemma~\ref{lem:tensor} for the noise case, we do not need convex combinations: one point in each set $S(T_i)$ is enough. Note also that for $N$ copies we have a sum of $N$ points above $S(N_\tau)$, and dividing this sum by $N$, we get a point in $S(N_\tau)$, as required. Proposition~\ref{prop:noise-entropy} is proven.

In other words, for a fixed entropy $\HH(P)$ the minimal entropy of $P'$ is achieved for the Bernulli distribution $P=B_p$ for a suitably chosen $p$. As we have said, this is the Shannon information theory version of our main result about increasing complexity by random noise. 

\section*{Appendix C: The proof of McDiarmid's inequality}

In this section we reproduce the standard proof of McDiarmid's inequality for the reader's convenience. We start with a technical lemma about the expectation of an exponent of a bounded random variable.

\begin{proposition}[H\"oeffding's lemma]
	If any two values of a real random variable $U$ differ at most by $c$, then $\E\exp(U-\E U)\le \exp(c^2/8)$.
\end{proposition}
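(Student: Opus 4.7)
The plan is to reduce to the centered variable $V = U - \E U$, which satisfies $\E V = 0$ and takes values in some interval $[a,b]$ with $b - a \le c$. Since $\E V = 0$, we must have $a \le 0 \le b$. The goal becomes showing $\E e^{V} \le \exp(c^2/8)$, or equivalently, bounding the moment-generating function of a mean-zero variable supported in a short interval.

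The first step is to exploit convexity of $t \mapsto e^{t}$: for every $t \in [a,b]$ we have the linear upper bound
\[
e^{t} \le \frac{b - t}{b - a}\, e^{a} + \frac{t - a}{b - a}\, e^{b}.
\]
Taking expectations and using $\E V = 0$ gives
\[
\E e^{V} \le \frac{b}{b-a}\, e^{a} + \frac{-a}{b-a}\, e^{b}.
\]
Setting $p = -a/(b-a) \in [0,1]$ and $h = b - a$, the right-hand side becomes $e^{\phi(h)}$, where
\[
\phi(h) = -p h + \ln\bigl(1 - p + p\, e^{h}\bigr).
\]

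The second step is a routine but essential calculus estimate for $\phi$. I would check that $\phi(0) = 0$ and $\phi'(0) = 0$, then compute
\[
\phi''(h) = \frac{p(1-p)\,e^{h}}{\bigl(1 - p + p\, e^{h}\bigr)^{2}} = u(1-u), \qquad u := \frac{p\, e^{h}}{1 - p + p\, e^{h}} \in [0,1],
\]
so $\phi''(h) \le 1/4$ uniformly. A second-order Taylor expansion with remainder then gives $\phi(h) \le h^{2}/8$, hence
\[
\E e^{V} \le \exp(h^{2}/8) \le \exp(c^{2}/8),
\]
using $h \le c$.

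The only mildly delicate step is the uniform bound $\phi''(h) \le 1/4$, which relies on recognizing the derivative as $u(1-u)$ for a quantity $u \in [0,1]$; once that is in place, everything else is a direct application of convexity and Taylor's theorem. Nothing here requires further ingredients beyond elementary calculus, so I expect no real obstacle, just care in the bookkeeping when writing $\E e^{V}$ as $e^{\phi(h)}$ after the linear interpolation.
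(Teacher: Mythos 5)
Your proof is correct and follows essentially the same route as the paper's: both reduce to the function $\varphi(t)=\ln(1-p+pe^{t})-pt$ via the convexity (secant-line) bound on the exponential and then establish $\varphi''\le 1/4$ (your $u(1-u)$ form and the paper's $uv/(u+v)^2$ form are the same bound). The only cosmetic difference is that you center $U$ at its mean while the paper shifts it to $[0,c]$; no gap either way.
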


\begin{proof}
In this statement we may change $U$ by a constant, so we may assume that $0\le U\le c$. Then $\E U$ is equal to $pc$ for some $p\in [0,1]$. The exponent function is convex, therefore
\[
\exp(u) \le \frac{c-u}{c}\cdot 1 + \frac{u}{c}\cdot e^c
\]
for $u\in[0,c]$, and 
\[
 \E \exp (U) \le \frac{c-\E u}{c} + \frac{\E U}{c}\cdot e^c=1-p+pe^c.
\]
We need to show that
\[
\E \exp (U-\E U)=\E \exp(U)/\exp(\E U)\le (1-p+pe^c)/e^{pc} \le \exp(c^2/8).
\]
Taking logarithms, we need to show that for all $t\ge 0$ and $p\in[0,1]$ we always have
\[
\varphi(t):= \ln (1-p+pe^t) - pt \le t^2/8.
\]
Note that we replaced $c$ by $t$ since we plan to consider the left hand side as a function of $t$ (for fixed $p$) and compute its derivatives (and $c$ looks more like a notation for a constant). To prove this inequality for all $t\ge0$, it is enough to show that
\[
\varphi(0)=0,\ \varphi'(0)=0,\ \text{and }\ \varphi''(t)\le 1/4\ \text{for all $t\ge 0$} 
\]
(use Taylor's formula or just integrate twice: the first integration gives $\varphi'(t)\le t/4$ for $t\ge0$). The equality $\varphi(0)=0$ is obvious; for the two other claims we have to compute
\[
\varphi'(t)=\frac{1}{1-p+pe^t}\cdot pe^t - p,\quad
\varphi''(t)=-\frac{1}{(1-p+pe^t)^2}(pe^t)^2+ \frac{1}{1-p+pe^t}\cdot pe^t
\]
We see immediately that $\varphi'(0)=0$; for the second inequality we have to rewrite
\begin{multline*}
\varphi''(t)=-\frac{1}{(1-p+pe^t)^2}(pe^t)^2+ \frac{1}{1-p+pe^t}\cdot pe^t=\\=
\frac{-(pe^t)^2+(1-p+pe^t)pe^t}{(1-p+pe^t)^2}=
\frac{(1-p)pe^t}{(1-p+pe^t)^2}.
\end{multline*}
The last expression has the form $uv/(u+v)^2$ for $u=1-p$ and $v=pe^t$, and therefore does not exceed $1/4$.
\end{proof}

This lemma is a key step in the proof of an inequality about martingales, the \emph{Azuma--H\"oeffding inequality}. Consider finite probability spaces $X_1,\ldots,X_n$ and the product probability space $X_1\times\ldots\times X_n$ (with independent coordinates). Let $U_0,\ldots,U_n$ be a sequence of random variables defined on the product space. We assume that $U_i(x_1,\ldots,x_n)$ depends only on the first $i$ arguments. In particular, $U_0$ is a constant, and $U_i$ can be written as $U(x_1,\ldots,x_i)$. Assume that $U_0,\ldots,U_n$ is a martingale, i.e., the expected value $\E_{x_i\in X_i}U_i(x_1,\ldots,x_i)$ for fixed values of $x_1,\ldots,x_{i-1}$ equals $U_{i-1}(x_1,\ldots,x_{i-1})$ for every $x_1\in X_1,\ldots,x_{i-1}\in X_{i-1}$.

\begin{proposition}[Azuma-H\"oeffding inequality]
Assume that for some constants $c_1,\ldots,c_n$ and for all $i=1,\ldots, n$ the following condition is satisfied: $U_i(x_1,\ldots,x_i)$ changes at most by $c_i$ if we change $x_i$ leaving the other arguments $x_1,\ldots,x_{i-1}$ unchanged. 
Then the following inequality holds:
	\[ \Pr[U_n-U_0\ge z] \le \exp\Big(-\frac{2z^2}{\sum_{i=1}^n c_i^2}\Big). \]
\end{proposition}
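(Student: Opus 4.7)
The plan is to use the standard exponential-moment method (Chernoff-style) adapted to the martingale setting, with Hoeffding's lemma (just proved) doing the per-step work. First I would pass from the deviation event to a moment-generating-function bound: for any $\lambda > 0$, Markov's inequality applied to the increasing function $t \mapsto \exp(\lambda t)$ yields
\[
\Pr[U_n - U_0 \ge z] \le e^{-\lambda z}\, \E \exp(\lambda(U_n - U_0)).
\]
So the task reduces to controlling $\E \exp(\lambda(U_n - U_0))$.

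Next I would write $U_n - U_0 = \sum_{i=1}^n D_i$ where $D_i = U_i - U_{i-1}$ is the martingale difference. Two properties are key: by the martingale hypothesis, $\E[D_i \mid x_1,\ldots,x_{i-1}] = 0$; and since $U_{i-1}(x_1,\ldots,x_{i-1})$ does not depend on $x_i$, the bounded-difference assumption on $U_i$ transfers to $D_i$, so that, with $x_1,\ldots,x_{i-1}$ fixed, the random variable $D_i$ takes values in some interval of length at most $c_i$. Applying Hoeffding's lemma conditionally gives
\[
\E\!\left[\exp(\lambda D_i) \,\big|\, x_1,\ldots,x_{i-1}\right] \le \exp(\lambda^2 c_i^2 / 8).
\]

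Then I would peel off one factor at a time from the outside in, using the tower property: conditioning on $x_1,\ldots,x_{n-1}$ and pulling out the deterministic factor $\exp(\lambda \sum_{i<n} D_i)$, the inner expectation is bounded by $\exp(\lambda^2 c_n^2/8)$, and iterating $n$ times yields
\[
\E \exp(\lambda(U_n - U_0)) \le \exp\!\Bigl(\tfrac{\lambda^2}{8}\textstyle\sum_{i=1}^n c_i^2\Bigr).
\]
Combining with the Markov bound gives $\Pr[U_n-U_0 \ge z] \le \exp(-\lambda z + \lambda^2 C/8)$ where $C = \sum c_i^2$, and optimizing in $\lambda$ (the minimum is attained at $\lambda = 4z/C$) produces the claimed bound $\exp(-2z^2/C)$.

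No step here is really an obstacle — Hoeffding's lemma is the only nontrivial ingredient and has already been established. The only care needed is the conditional application of Hoeffding's lemma: one must verify that both the zero-mean property and the range bound hold after conditioning on $x_1,\ldots,x_{i-1}$, which is exactly where the martingale hypothesis and the fact that $U_{i-1}$ is measurable with respect to the first $i-1$ coordinates are used.
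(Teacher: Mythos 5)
Your proposal is correct and follows essentially the same route as the paper's proof: Markov's inequality applied to the exponential moment, the telescoping martingale-difference decomposition, a conditional application of H\"oeffding's lemma to peel off one factor $\exp(t^2c_i^2/8)$ at a time, and optimization over the exponent parameter. You are also right to flag the one point requiring care --- that both the zero conditional mean and the range bound for $U_i-U_{i-1}$ hold after conditioning on $x_1,\ldots,x_{i-1}$ --- which the paper uses implicitly in the same way.
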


\begin{proof}[Proof of the Azuma-H\"oeffding inequality]
For arbitrary positive $t$ we may write the Markov inequality for the random variable $\exp (t(U_n-U_0))$:
\[
	\Pr[U_n-U_0 \ge z] = \Pr[\exp(t(U_n-U_0)) \ge \exp(tz)] \le \exp(-tz)\E\exp(t(U_n - U_0)).
\]
The expectation $\E\exp(t(U_n - U_0))$ can be rewritten as 
\begin{multline*}
\E_{x_1,\ldots,x_n}\exp(t(U_n - U_0))=\E_{x_1,\ldots,x_{n}} \left(\exp(t(U_{n-1}-U_0))\exp (t(U_n-U_{n-1}))\right) =\\
= \E_{x_1,\ldots,x_{n-1}} \exp(t(U_{n-1}-U_0)) \E_{x_n|x_1,\ldots,x_{n-1}} \exp (t(U_n-U_{n-1})).
\end{multline*}
For every fixed $x_1,\ldots,x_{n-1}$ we can apply H\"oeffding's lemma to the internal expectation: it is bounded by $\exp(t^2c_n^2/8)$. The factor $\exp(t^2c_n^2/8)$ is a constant, so we may take out this factor and continue:
\[
\ldots \le   \exp(t^2c_n^2/8)\E_{x_1,\ldots,x_{n-1}} \exp(t(U_{n-1}-U_0)).
\]
Then the same procedure is repeated with $x_{n-1},\ldots,x_1$:
\[
\ldots \le \exp(t^2c_n^2/8)\exp(t^2c_{n-1}^2/8)\E_{x_1,\ldots,x_{n-2}} \exp(t(U_{n-2}-U_0))
\le\ldots\le  \exp\left(\frac{t^2\sum_{i=1}^n c_i^2}{8}\right).
\]
We conclude that
\[
\Pr[U_n-U_0 \ge z] \le \exp(-tz)\exp\left(\frac{t^2\sum_{i=1}^n c_i^2}{8}\right).
\]
This is true for every $t\ge0$, so we chose the value of $t$ that makes the right-hand side (exponent of a quadratic polynomial) minimal and get the required inequality.
\end{proof}
Now we get the McDiarmid inequality (Proposition~\ref{prop:McDiarmid}) as an easy consequence of the Azuma--H\"oeffding inequality.
\begin{proof}[Proof of the McDiarmid inequality]
The McDiarmid inequality deals with a function $f$ on $X_1\times\ldots\times X_n$ that changes at most by $c_i$ when $i$th argument is changed. We apply the Azuma-H\"oeffding inequality to functions $U_i(x_1,\dots,x_i)$ that are expectations of $f$ when $x_1,\ldots,x_i$ are fixed:
\[
U_i(x_1,\ldots,x_i)= \E_{x_{i+1},\ldots,x_n}[f(x_1,\ldots,x_n)\mid x_1, \ldots, x_i].
\]
When we fix $x_1,\ldots,x_{i-1}$ and change $x_i$, the expectation that defines $U_i$ changes at most by~$c_i$. Indeed, for every fixed values of $x_{i+1},\ldots,x_n$ the value of $f(x_1,\ldots,x_n)$ changes at most by $c_i$ when changing $x_i$ (due to our assumption about $f$). So the same is true for $U_i$ (and $U_i-U_{i-1}$ as well), and we may apply the Azuma--H\"oeffding inequality. In this way we get the required bound.
\end{proof}

\bibliographystyle{plain}
\bibliography{bibliography}

\begin{thebibliography}{10}

\bibitem{Ahlswede1976}
Rudolf Ahlswede, Peter G{\'a}cs, and J{\'a}nos K{\"o}rner.
\newblock Bounds on conditional probabilities with applications in multi-user
  communication.
\newblock {\em Zeitschrift f{\"u}r Wahrscheinlichkeitstheorie und verwandte
  Gebiete}, 34(2):157--177, 1976.

\bibitem{Buhrman2005}
Harry Buhrman, Lance Fortnow, Ilan Newman, and Nikolai~K. Vereshchagin.
\newblock Increasing {K}olmogorov complexity.
\newblock In Volker Diekert and Bruno Durand, editors, {\em {STACS} 2005, 22nd
  Annual Symposium on Theoretical Aspects of Computer Science, Stuttgart,
  Germany, February 24-26, 2005, Proceedings}, volume 3404 of {\em Lecture
  Notes in Computer Science}, pages 412--421. Springer, 2005.

\bibitem{DowneyHirschfeldt}
Rodney~G. Downey and Denis~R. Hirschfeldt.
\newblock {\em Algorithmic Randomness and Complexity}.
\newblock Theory and Applications of Computability. Springer, 2010.

\bibitem{Frankl1981}
Peter Frankl and Zolt{\'{a}}n F{\"{u}}redi.
\newblock A short proof for a theorem of {H}arper about {H}amming-spheres.
\newblock {\em Discrete Mathematics}, 34(3):311--313, 1981.

\bibitem{Greenberg2018}
Noam Greenberg, Joseph~S. Miller, Alexander Shen, and Linda~Brown Westrick.
\newblock Dimension 1 sequences are close to randoms.
\newblock {\em Theoretical Computer Science}, 705:99--112, 2018.

\bibitem{Hoeffding1963}
Wassily Hoeffding.
\newblock Probability inequalities for sums of bounded random variables.
\newblock {\em Journal of the American statistical association},
  58(301):13--30, 1963.

\bibitem{Kolmogorov1965}
Andrei~N. Kolmogorov.
\newblock {Three approaches to the quantitative definition of information}.
\newblock {\em Problems of Information Transmission}, 1(1):3--11, 1965.

\bibitem{LiVitanyi2008}
Ming Li and Paul M.~B. Vit{\'{a}}nyi.
\newblock {\em An Introduction to {K}olmogorov Complexity and Its Applications,
  Third Edition}.
\newblock Texts in Computer Science. Springer, 2008.

\bibitem{Margulis1974}
Grigory~A. Margulis.
\newblock Probabilistic properties of highly connected graphs
  \ifxetex\begin{russian}(Вероятностные
  характеристики графов с большой
  связностью)\end{russian}\fi.
\newblock {\em Problems of Information Transmission}, 10(2):174--179, 1974.

\bibitem{Marton1986}
Katalin Marton.
\newblock A simple proof of the blowing-up lemma.
\newblock {\em {IEEE} Transactions on Information Theory}, 32(3):445--446,
  1986.

\bibitem{McDiarmid1989}
Colin McDiarmid.
\newblock {\em On the method of bounded differences}, pages 148--188.
\newblock London Mathematical Society Lecture Note Series. Cambridge University
  Press, 1989.

\bibitem{ODonnell2014}
Ryan O'Donnell.
\newblock {\em Analysis of Boolean Functions}.
\newblock Cambridge University Press, 2014.

\bibitem{ShenUspenskyVereshchagin}
Alexander Shen, Vladimir~A Uspensky, and Nikolay Vereshchagin.
\newblock {\em {K}olmogorov complexity and algorithmic randomness}, volume 220.
\newblock American Mathematical Society, 2017.

\bibitem{Vereshchagin2017}
Nikolai~K. Vereshchagin and Alexander Shen.
\newblock {Algorithmic statistics: forty years later}.
\newblock In {\em Computability and Complexity. Essays Dedicated to Rodney G.
  Downey on the Occasion of His 60th Birthday. Lecture Notes in Computer
  Science, v.{\~{}}10010}, pages 669--737. Springer, July 2017.

\bibitem{Vereshchagin2010}
Nikolai~K. Vereshchagin and Paul M.~B. Vit{\'{a}}nyi.
\newblock {Rate Distortion and Denoising of Individual Data Using Kolmogorov
  Complexity}.
\newblock {\em IEEE Transactions on Information Theory}, 56(7):3438--3454,
  2010.

\bibitem{Wyner1973}
Aaron~D. Wyner and Jacob Ziv.
\newblock {A Theorem on the Entropy of Certain Binary Sequences and
  Applications: Part I}.
\newblock {\em IEEE Transactions on Information Theory}, 19(6):769--772, 1973.

\end{thebibliography}

\end{document}